\newtheorem{theorem}{Theorem}
\newtheorem{corollary}[theorem]{Corollary}
\theoremstyle{definition}
\newtheorem{definition}{Definition}
\newtheorem{lemma}[theorem]{Lemma}
\DeclareMathOperator{\R}{\mathbb{R}}
\newcommand{\E}{\mathbb{E}}
\newcommand{\cL}{\mathcal{L}}
\newcommand{\cW}{\mathcal{W}}
\newcommand{\cO}{\mathcal{O}}
\newcommand{\cX}{\mathcal{X}}
\newcommand{\bx}{\mathbf{x}}
\newcommand{\eps}{\varepsilon}
\newcommand{\VIEW}{\textsc{view}}
\newcommand{\Z}{\mathbb{Z}}
\newcommand{\NB}{\mathrm{NB}}
\newcommand{\sNB}{\mathrm{sNB}}
\newcommand{\OPT}{\mathrm{OPT}}
\newcommand{\LP}{\mathrm{LP}}
\renewcommand{\setminus}{\smallsetminus}
\newcommand{\cP}{\mathcal{P}}
\newcommand{\supp}{\mathrm{supp}}
\newcommand{\bt}{\mathbf{t}}
\newcommand{\bzero}{\mathbf{0}}
\newcommand{\optlp}{{\OPT}_{\LP}}
\newcommand{\optp}{{\OPT}_{\mathrm{pack}}}
\newcommand{\bdeg}{\boldmath{\deg}}
\newcommand{\DLap}{\mathrm{DLap}}
\newcommand{\bN}{\mathbb{N}}
\newcommand{\ind}{\mathbf{1}}
\newcommand{\ta}{\tilde{a}}
\newcommand{\ba}{\mathbf{a}}
\newcommand{\by}{\mathbf{y}}
\newcommand{\optind}{\rho(G)}
\newcommand{\tbx}{\tilde{\bx}}
\newcommand{\tx}{\tilde{x}}
\newcommand{\tP}{\tilde{P}}
\newcommand{\packnum}{\rho}
\title{Differential Privacy on Trust Graphs}
\author{Badih Ghazi \\
Google Research \\
\and 
Ravi Kumar \\
Google Research \\
\and
Pasin Manurangsi \\
Google Research \\
\and 
Serena Wang \\
Google Research \\
}
\date{}
\begin{document}

\maketitle

\begin{abstract}
We study differential privacy (DP) in a multi-party setting where each party only trusts a (known) subset of the other parties with its data. Specifically, given a \emph{trust graph} where vertices correspond to parties and neighbors are mutually trusting, we give a DP algorithm for aggregation with a much better privacy-utility trade-off than in the well-studied local model of DP (where each party trusts no other party). We further study a robust variant where each party trusts all but an unknown subset of at most $t$ of its neighbors (where $t$ is a given parameter), and give an algorithm for this setting. We complement our algorithms with lower bounds, and discuss implications of our work to other tasks in private learning and analytics.
\end{abstract}

\section{Introduction}

\emph{Differential privacy} (DP) \citep{dwork2006calibrating, dwork2006our} is a rigorous privacy notion that has seen extensive study (e.g., \citep{dwork2014algorithmic, vadhan2017complexity}) and widespread adoption in analytics and learning (e.g., \citep{ding2017collecting, abowd2018us, tf-privacy, pytorch-privacy}). It dictates that the output of a randomized algorithm remains statistically indistinguishable if the data of a single user changes. The most widely studied models of DP are the \emph{central} model where a trusted curator is given access to the raw data and required to output a DP estimate of the function of interest, and the \emph{local} model \citep{evfimievski2003limiting, kasiviswanathan2011can} where every message leaving each user’s device is  required to be DP. While the latter is 
 compelling in that each user needs to place minimal trust in other users, it is known to suffer from a significantly higher utility degradation compared to the former (e.g., \citep{ChanSS12}).

In practice, data sharing settings often include situations where a user is willing to place more trust in a \emph{subset} of other users. For example, many people have different privacy sensitivities depending on their relationships with others: 
Alice might be willing to share her location data with family and close friends, but unwilling to have her location data be recoverable by strangers from a public channel. This relates to philosophical conceptualizations of privacy as \textit{control over personal information}, in that individuals may specify with whom they are willing share their information \citep{solove2002conceptualizing} (Appendix~\ref{app:motivating_examples}).

\paragraph{Trust Graph DP Models.}
In this work, we model such relationships as a \emph{trust graph} where vertices correspond to different users and neighboring vertices are mutually trusting (see Figure \ref{fig:network} for a simple example). 
We define and study DP over such trust graphs, whereby the DP guarantee is enforced on the messages exchanged by each vertex or its trusted neighbors on one hand, and the other non-trusted vertices on the other hand. 



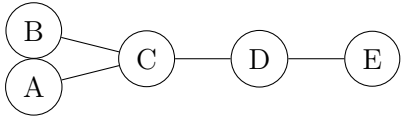
\begin{figure}[!ht]
    \centering
\begin{minipage}[c]{0.4\textwidth}
    \begin{tikzpicture}[scale=0.75]
    \node[shape=circle,draw=black] (A) at (0,0.5) {A};
    \node[shape=circle,draw=black] (B) at (0,1.5) {B};
    \node[shape=circle,draw=black] (C) at (2,1) {C};
    \node[shape=circle,draw=black] (D) at (4,1) {D};
    \node[shape=circle,draw=black] (E) at (6,1) {E};

    \path [-] (A) edge node[left] {} (B);
    \path [-](B) edge node[left] {} (C);
    \path [-](C) edge node[left] {} (D);
    \path [-](C) edge node[left] {} (A);
    \path [-](D) edge node[left] {} (E);
\end{tikzpicture}
  \end{minipage}\hfill
  \begin{minipage}[c]{0.6\textwidth}
    \caption{Simple example trust graph. User A is only willing to share their data with users B and C, and user C is additionally willing to share their data with D. We introduce a privacy model (TGDP) in which users D and E cannot identify user A's data based on any communication exchanged. }
    \label{fig:network}
  \end{minipage}
\end{figure}

Specifically, we define notions of \textit{Trust Graph DP (TGDP)} that  generalize existing definitions of local DP \citep{kasiviswanathan2011can} and central DP, and effectively interpolate between them. Informally, TGDP requires that the distribution of all messages exchanged by each vertex $v$ or one of its neighbors with any vertices that are not trusted by $v$ should remain statistically indistinguishable if the input data held by $v$ changes; we formalize this in Definition~\ref{def:graph-dp}.

We further extend TGDP to capture \textit{robustness} to potentially compromised neighbors. Namely, the above privacy guarantee can break if a single neighbor of a vertex turns out to be untrustworthy. Thus, we introduce the notion of \textit{Robust Trust Graph DP (RTGDP)}, which maintains the privacy guarantee even if some unknown subset of neighbors of a given size is compromised; see Definition~\ref{def:robust-graph-dp}.

\paragraph{Our Results.} 
Having defined these trust graph-based models of DP, we give algorithms for the basic \emph{aggregation} primitive that satisfy both TGDP and RTGDP notions. Notably, we propose algorithms that depend on linear programming formulations that can be computed in polynomial time (Theorems~\ref{thm:lp-protocol} and~\ref{thm:lp-protocol-robust}). We complement our algorithms with lower bounds on the error that depend on combinatorial properties of the graph (Theorems~\ref{thm:tgdplb} and~\ref{thm:robust-lb-generic}). Although closing the gap between our upper and lower bounds is still open, we obtain a \emph{bi-criteria} result showing that the upper bound is not much larger than the lower bound when we slightly increase the robustness parameter $t$ (Theorem~\ref{thm:robust-bicriteria-gap}).

The aggregation primitive we study in this work is a basic building block. Indeed, our work implies new DP algorithms over trust graphs for other problems in  learning and  analytics (see Appendix~\ref{app:ML}).

We supplement the theory with evaluation on nine real network datasets including email communication networks, social networks, and cryptocurrency trust networks. Our results show that the utility degradation when satisfying TGDP and RTGDP can be significantly lower than that of local DP. Thus, when trust relationships exist, accounting for these relationships when incorporating DP into a system can considerably improve overall utility.


\subsection{Technical Overview}

We first give a TGDP mechanism for the integer aggregation problem with a mean-squared error (MSE) that scales linearly in the size of any dominating set\footnote{See \Cref{sec:trust_graph_DP} for the formal definitions of a dominating set and a packing.} of the trust graph. The idea of the protocol is the following: Each user identifies one user in the dominating set that they trust and to whom they send their input. Then, each user in the dominating set simply runs the central (discrete) Laplace mechanism to privatize the data. The final estimate is the sum of all these sub-estimates in the dominating set. The MSE grows with the dominating set size, as formalized in Theorem~\ref{thm:domset-protocol}.

A disadvantage of the protocol described above is that, in order to minimize the error, it requires the knowledge of a minimum dominating set. However, computing a minimum dominating set is  NP-hard and even hard to approximate~\citep{Feige98}. So we give (in Theorem~\ref{thm:lp-protocol}) a protocol that not only is efficient but can also reduce the error by up to $O(\log n)$ factor. This protocol only requires a solution to a linear program (LP), which, unlike the minimum dominating set, can be computed in polynomial time. At a high-level, there are two key ideas in this protocol: 

\medskip
\noindent \emph{(i) Input Splitting:} Instead of sending the input to a single vertex as in the previous section, each user will split their input into (random) additive shares and send it to all its neighbors. The input splitting idea originated in cryptography~\citep{IKOS06} and has recently found applications in the shuffle model of DP~\citep{BBGN20,GMPV20}, although the nature of how we use it here is quite different from those previous works.

\medskip
\noindent\emph{(ii) Distributed Noise Addition:} Similar to the previous protocol, each user again broadcasts the sum of all messages they receive with some noise added. The main difference here is that, instead of using the discrete Laplace noise, we use the negative binomial noise designed in such a way that, when sufficiently many of them are summed up, they guarantee DP. This helps reduce the amount of noise required in the protocol. (The idea of distributed noise generation dates back to the early works on DP~\citep{dwork2006our}, but the distributions we use here are from~\citep{BBGN20}.)

\medskip

What we gain by applying the input splitting is that, due to the properties of random additive shares, the only way the adversary learns anything about $x_v$ is to sum up all the messages broadcast from its neighbors. By a careful design of the distributed noise distribution, we can ensure that this sum contains sufficient noise to provide DP guarantees.

Our lower bound on integer aggregation with TGDP (Theorem~\ref{thm:tgdplb}) shows that the MSE grows with the packing number of the trust graph.  The main idea is to  transform any TGDP protocol to a local DP (LDP) protocol with the same privacy
and utility, but with a number of users equal to the packing number.  The ``packing'' property ensures that the users are ``isolated'' from each other in the reduction step.

For our results in the RTGDP model, we consider the same LP but impose a stricter constraint to ensure DP guarantees even when some neighbors of each vertex are compromised. To prove our bi-criteria tightness (Theorem~\ref{thm:robust-bicriteria-gap}), we study the dual of the LP and apply randomized rounding to convert the fractional solution into an integral one.

\subsection{Related Work}\label{ref:related_work}

Secure multiparty computation (SMPC) \citep{yao1982protocols,yao1986generate,goldreich2019play} can be leveraged to allow users to achieve central DP utility without relying on a trusted curator; this is done via cryptographic protocols whose security relies on computational hardness assumptions~\citep{dwork2006our,BeimelNO08,bonawitz2017practical,bell2020secure}. 
An important distinction between our model and SMPC is that while the privacy of SMPC protocols relies on computational hardness assumptions, the privacy guaranteed in our proposed TGDP model is information-theoretic (and thus stronger). Still, our proposed TGDP model can also be thought of as relaxing the SMPC threat model to assume that a subset of at least a certain number of users is  trustworthy (or would execute the algorithm faithfully). Specifically, the proposed TGDP notion could enable higher-utility protocols than the state-of-the-art in SMPC, by making a stronger (but realistic) assumption that different users fully trust some subset of other users (namely, their neighbors in the trust graph). 

We also point out that the multi-central (a.k.a. multi-server) model of DP \citep{steinke2020multi, cheu2023necessary} is a special case of our RTGDP model. Another intermediate model between local and central DP is the shuffle model \citep{bittau17,erlingsson2019amplification,CheuSUZZ19}, where aggregation has been extensively studied (e.g., \citep{balle2019privacy, ghazi2020private, ghazi2021differentially}); but this model does not capture mutually trusting relationships between different pairs of users. Finally, the network DP model \citep{cyffers2022privacy} is a different relaxation of local DP where the (DP) communication between users is restricted to the edges of a given graph; this is in contrast to our proposed TGDP model where a given graph encodes trust relationships.

\paragraph{Organization.}
We start with some background in Section~\ref{sec:prelim}. In Section~\ref{sec:trust_graph_DP}, we formally define the notion of DP on trust graphs, and give algorithms and a lower bound for solving the aggregation task under this privacy guarantee. The RTGDP notion is defined in Section~\ref{ref:robust_trust_graph_DP} where we also give an algorithm and lower bounds for aggregation under this robust notion. 
We conclude with some interesting future directions in Section~\ref{sec:conc_fut_dir}. In Appendix~\ref{app:motivating_examples}, we provide additional motivating examples for TGDP.  Missing proofs are  in Appendix~\ref{app:proofs}.  In Appendix~\ref{app:exp}, we include experiments in which we report our given upper and lower bounds on real network datasets.
\section{Preliminaries}\label{sec:prelim}

Let $n$ users be represented as vertices $V = \{1,...,n\}$ of a graph $G = (V,E)$, where  $E \subseteq V^2$ corresponds to the set of pairs $(i, j)$ of users, where $i$ and $j$ are willing to share their data with each other.%
\footnote{For simplicity, we focus only on the ``symmetric'' notion of sharing. It is relatively simple to extend all of our algorithms to the ``asymmetric'' version as well.}
Let $\mathcal{X}$ be any domain and suppose each user $i \in V$ has data $x_i \in \mathcal{X}$, and the (full) input dataset is given by $(x_1, \ldots, x_n) = \mathbf{x} \in \mathcal{X}^n$. Let $N(v)$ be the \emph{neighborhood} of $v$ in $G = (V, E)$, i.e., $N(v) = \{ u \mid (u, v) \in E \}$ and let $N[v]$ be the \emph{closed neighborhood} of $v$, i.e., $N[v] = N(v) \cup \{v\}$.

\subsection{Differential Privacy Definitions and Tools}


\begin{definition}\label{def:dp} (DP;~\citep{dwork2006calibrating, dwork2006our})
A randomized mechanism $M: \mathcal{X}^n \to \cO$ is \emph{($\eps$, $\delta$)-differentially private ($(\epsilon, \delta)$-DP)} if for all pairs $\mathbf{x}, \mathbf{x}' \in \mathcal{X}^n$ of datasets that differ only in the data of a single user, and for all subsets $S \subseteq \cO$, $\Pr[M(\mathbf{x}) \in S] \leq e^\eps \Pr[M(\mathbf{x}') \in S] + \delta$.
\end{definition}

For brevity, we write $(\eps, 0)$-DP as $\eps$-DP (a.k.a., \emph{pure}-DP).

In \emph{non-interactive local DP}, each user has to randomize their own input and send it to the server (or, alternatively, publish it). In this case, each user's randomized output is required to be DP: 

\begin{definition}\label{def:nonint-ldp} (Non-Interactive Local DP;~\citep{kasiviswanathan2011can})
A randomized mechanism $M: \mathcal{X} \to \cO$ is a \emph{non-interactive ($\eps$, $\delta$)-local DP randomizer} if for any pair $x, x' \in \mathcal{X}$, and for all subsets $S \subseteq \cO$, $\Pr[M(x) \in S] \leq e^\eps  \Pr[M(x') \in S] + \delta$.
\end{definition}

To define DP properties of possibly interactive protocols, we follow the approach of \citep{BeimelNO08}. First, we define the notion of a protocol view. 

\begin{definition}[View]
The \emph{view} of a protocol $P$ at vertex $u$ for an input dataset $\bx \in \mathcal{X}^n$, denoted $\VIEW^{u}_P(\bx)$, consists of the input $x_u$ and all  messages received and sent (together with the corresponding source/destination) by the vertex $u$.
\end{definition}
The view of the protocol $P$ for a subset $S \subseteq V$ of vertices is defined as $\VIEW^S_P(\bx) := (\VIEW^u_P(\bx))_{u \in S}$. Let $\cO$ be the set of all possible views. For any $T \subseteq V$, we write $\bx_{-T}$ as a shorthand for $\bx_{V \setminus T}$ and, for any $v \in V$, $\bx_{-v}$ as a shorthand for $\bx_{-\{v\}}$. When the protocol is interactive, local DP can be defined as follows~\cite{BeimelNO08}:
\begin{definition} \label{def:int-ldp} (Interactive Local DP)
A protocol $P$ satisfies \emph{$(\epsilon,\delta)$-local DP ($(\epsilon, \delta)$-LDP)} if for each vertex $v \in V$, $\VIEW^{V \setminus \{v\}}_P(\bx)$ satisfies $(\eps,\delta)$-DP with respect to the input $x_v$ for all values of $\bx_{-v}$. I.e., for all pairs $x_v, x'_v \in \mathcal{X}$, all values of $\bx_{-v}$, and all subsets $S \subseteq \cO$,
\[\Pr[\VIEW^{V \setminus \{v\}}_P(x_v, \bx_{-v}) \in S]\leq e^\eps \Pr[\VIEW^{V \setminus \{v\}}_P(x'_v, \bx_{-v}) \in S] + \delta.\]
\end{definition}

For pure-DP, it is useful to define $D_{\infty}(\cP ~\|~ \cP') := \max_{o \in \supp(\cP)} \ln\left(\frac{\Pr_{X \sim \cP}[X = o]}{\Pr_{X' \sim \cP}[X' = o]}\right)$ for distributions $\cP, \cP'$. We will sometimes use random variables and distributions interchangeably. 

It is well-known that DP is robust to post-processing. 
This fact will be useful in our privacy analysis.

\begin{lemma}[Post-Processing] \label{lem:postp}
For any random variables $X, X'$ and a (possibly randomized) function $f$, we have $D_{\infty}(f(X) ~\|~ f(X')) \leq D_{\infty}(X ~\|~ X')$.
\end{lemma}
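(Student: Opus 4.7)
The plan is to prove the lemma in two stages: first for deterministic post-processing, then lift to the randomized case via a standard coupling with independent randomness.

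For the deterministic case, let $c := D_\infty(X \| X')$, so that $\Pr[X = x] \leq e^c \Pr[X' = x]$ for every $x$ in the support of $X$. Fix an outcome $o \in \supp(f(X))$ and decompose by preimages:
\[
\Pr[f(X) = o] \;=\; \sum_{x \,:\, f(x) = o} \Pr[X = x] \;\leq\; e^c \sum_{x \,:\, f(x) = o} \Pr[X' = x] \;=\; e^c \, \Pr[f(X') = o].
\]
Taking the logarithm of the worst-case ratio over $o$ gives $D_\infty(f(X) \| f(X')) \leq c$, as required. The only subtlety here is the case where the denominator vanishes, but if $\Pr[X' = x] = 0$ for some $x$ with $\Pr[X = x] > 0$, then $D_\infty(X \| X') = \infty$ and the bound is vacuous; otherwise the supports align appropriately so the preimage sum on the right is bounded away from zero when the left side is positive.

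For the randomized case, represent $f$ as a deterministic function $g$ of its input together with an independent source of randomness $R$, i.e., $f(X) \stackrel{d}{=} g(X, R)$ where $R \perp\!\!\!\perp X$ and similarly $R \perp\!\!\!\perp X'$ (we may take the same coupling since $R$ is independent of both). Because $R$ is independent of the input, the joint distribution satisfies
\[
\Pr[(X, R) = (x, r)] \;=\; \Pr[X = x]\Pr[R = r] \;\leq\; e^c \Pr[X' = x]\Pr[R = r] \;=\; e^c \Pr[(X', R) = (x, r)],
\]
so $D_\infty((X, R) \| (X', R)) \leq D_\infty(X \| X')$. Applying the deterministic case to $g$ on the pair $(X, R)$ versus $(X', R)$ yields the desired inequality.

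The main thing to be careful about is the independence of the auxiliary randomness $R$ from both $X$ and $X'$ — this is what makes the joint divergence collapse to the marginal divergence. Aside from this bookkeeping, the argument is entirely routine, and I do not anticipate any real obstacle; the proof can be written in a few lines once the randomized-to-deterministic reduction is set up.
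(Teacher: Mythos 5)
Your proof is correct and is the standard argument: the paper itself does not prove this lemma at all, stating it only as a well-known fact, so there is nothing to compare against. Both steps of your argument are sound --- the preimage decomposition $\Pr[f(X)=o]=\sum_{x:f(x)=o}\Pr[X=x]$ for deterministic $f$ (noting that the pointwise bound $\Pr[X=x]\leq e^{c}\Pr[X'=x]$ holds for all $x$, trivially so outside $\supp(X)$), and the reduction of a randomized $f$ to a deterministic $g(\cdot,R)$ with $R$ independent of the input, under which the joint divergence $D_{\infty}((X,R)\,\|\,(X',R))$ collapses to the marginal one.
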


We will use the following  distributions for the noise:
\begin{itemize}
\item The \emph{negative binomial distribution} $\NB(r, p)$ with parameters $r > 0, p \in (0, 1)$ is supported on $\mathbb{Z}_{\geq 0}$ with density $\Pr[X = k] = {k+r-1 \choose k} (1-p)^k p^r,$ where $X \sim \NB(r, p)$.  Its variance is $r(1-p)/p^2$. 
\item Let $\sNB(r, p)$ be the distribution of $X - X'$ where $X, X' \sim \NB(r, p)$ are i.i.d.
\item The \emph{discrete Laplace distribution} $\DLap(b)$ with parameter $b > 0$ is supported on $\mathbb{Z}$ and its density is given by $\Pr[X = k] \propto \exp(-|k|/b)$ for $X \sim \DLap(b)$.
\end{itemize}
We will use the following facts in our analysis:
\begin{itemize}
\item If $X_1 \sim \sNB(r_1, p)$ and $X_2 \sim \sNB(r_2, p)$, then $X_1 + X_2 \sim \sNB(r_1 + r_2, p)$.
\item $\DLap(b)$ is the same distribution as $\sNB(1, 1 - e^{-1/b})$.
\end{itemize}

The discrete Laplace mechanism is well-known to guarantee DP in the central setting~\cite{GhoshRS12}. Below, we state a slightly more general version of this for $\sNB$ that will be convenient for our analysis. 

\begin{lemma} \label{lem:snb-dp}
For any $x, x' \in \{0, \dots, \Delta\}$, let $Z \sim \sNB(r, 1 - e^{-\eps/\Delta})$ where $r \geq 1$. Then, we have
$D_{\infty}(Z + x ~\|~ Z + x') \leq \eps.$
\end{lemma}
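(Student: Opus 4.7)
The plan is to reduce the case of general $r \geq 1$ to the standard discrete Laplace mechanism via the additivity of the symmetric negative binomial distribution and post-processing.

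First, I would use the additivity fact stated earlier: since $r \geq 1$, write $Z = Y_1 + Y_2$ where $Y_1 \sim \sNB(1, 1 - e^{-\eps/\Delta})$ and $Y_2 \sim \sNB(r-1, 1 - e^{-\eps/\Delta})$ are independent (with the convention that $\sNB(0, p)$ is the point mass at $0$, covering the boundary case $r = 1$). By the identification $\DLap(b) = \sNB(1, 1 - e^{-1/b})$ noted in the preliminaries, we have $Y_1 \sim \DLap(\Delta/\eps)$.

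Next, I would invoke the standard guarantee of the discrete Laplace mechanism \citep{GhoshRS12}: for any $x, x' \in \{0, \ldots, \Delta\}$ (so that $|x - x'| \leq \Delta$),
\begin{equation}
D_{\infty}(Y_1 + x ~\|~ Y_1 + x') \leq \eps.
\end{equation}
Finally, consider the randomized post-processing $f(z) := z + Y_2$, where $Y_2$ is independent of $Y_1$ and of the inputs $x, x'$. Applying the post-processing inequality (\Cref{lem:postp}) gives
\begin{equation}
D_{\infty}(Z + x ~\|~ Z + x') = D_{\infty}(f(Y_1 + x) ~\|~ f(Y_1 + x')) \leq D_{\infty}(Y_1 + x ~\|~ Y_1 + x') \leq \eps,
\end{equation}
which is the desired conclusion.

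The only subtlety is ensuring that the decomposition $Z = Y_1 + Y_2$ is valid for every $r \geq 1$, which is immediate from the stated additivity of $\sNB$ together with the degenerate $r = 1$ case. Beyond that, the argument is essentially a one-line reduction: the ``hard'' privacy work is done by the single $\DLap(\Delta/\eps)$ summand, while the remaining $\sNB(r-1, p)$ mass is inert noise absorbed by post-processing. No distribution-specific computation (e.g., ratio of NB densities) is needed.
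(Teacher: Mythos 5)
Your proof is correct and follows essentially the same route as the paper's: decompose $Z$ into an independent sum of a $\DLap(\Delta/\eps) = \sNB(1, 1-e^{-\eps/\Delta})$ component plus an $\sNB(r-1,\cdot)$ remainder, and invoke post-processing to reduce to the single discrete Laplace summand. The only cosmetic difference is that you cite the standard discrete Laplace guarantee where the paper writes out the two-line density-ratio calculation explicitly.
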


In the privacy analysis, we often consider $\VIEW^S_P(x_v, \bx_{-v})$ and $\VIEW^S_P(x'_v, \bx_{-v})$ for $x_v, x'_v \in \cX$. For  convenience, we will write $\VIEW^S_P(x)$ for $x \in \cX$ as a shorthand for $\VIEW^S_P(x_v, \bx_{-v})$ when $x_v = x$. Similarly, for a quantity $y$ that depends on $x_v$, we will write $y(x)$ to denote $y$ when $x = x_v$.




\section{Trust Graph Differential Privacy}\label{sec:trust_graph_DP}

We model trust relationships across users as a network where vertices correspond to users, and undirected edges connect users who are  mutually willing to share their data (see Figure \ref{fig:network}). We focus on undirected trust graphs, though extensions of our results to directed graphs are possible.
For a given trust graph, we define a general notion of Trust Graph DP, provide algorithms for achieving it, and analyze upper and lower bounds on the error for the integer aggregation problem.


\begin{definition} (Trust Graph DP) \label{def:graph-dp}
Let $G = (V, E)$.  A protocol $P$ satisfies \emph{$(\epsilon,\delta, G)$-Trust Graph DP ($(\epsilon, \delta, G)$-TGDP)} if for each vertex $v \in V$, $\VIEW^{V \setminus N[v]}_P(\bx)$ satisfies $(\eps,\delta)$-DP with respect to the input $x_v$ for all values of $\bx_{-v}$. I.e., for all pairs $x_v, x'_v \in \mathcal{X}$, all values of $\bx_{-v}$, and all subsets $S \subseteq \cO$,
\[\Pr[\VIEW^{V \setminus N[v]}_P(x_v, \bx_{-v}) \in S]\leq e^\eps \Pr[\VIEW^{V \setminus N[v]}_P(x'_v, \bx_{-v}) \in S] + \delta.\]
\end{definition}
Referring back to Figure \ref{fig:network} as an example, Definition \ref{def:graph-dp} says that even if users D and E pooled their messages together, their collective view would still be DP with respect to the data for user A.

Notably, the proposed TGDP model generalizes both the central DP model and the local DP model. The central DP model is captured when $G$ is a star graph in which all vertices (the users) entrust their data a single central vertex (the analyst): each user's data is private relative to the view of all other users, but all users trust the same central analyst (see Figure \ref{fig:network_star} in the Appendix). The local DP model, on the other hand, is captured when $G$ simply has no edges between any vertices. Our TGDP model thus introduces a flexibility to capture intermediate trust relationships, perhaps involving several local analysts, or more general trust graphs arising from social networks.

As before, we write $(\eps, 0, G)$-TGDP as $(\eps, G)$-TGDP. Note that the $(\eps, G)$-DP condition in \Cref{def:graph-dp} can be written as $D_{\infty}\left(\VIEW^{V \setminus N[v]}_P(x_v) ~\middle\|~ \VIEW^{V \setminus N[v]}_P(x'_v)\right) \leq \eps$.

Recall that for a graph $G = (V, E)$, a  \emph{dominating set} is a subset $U \subseteq V$ such that for every $v \in V \setminus U$, there is a $u \in U$ such that $(u, v) \in E$; the size of a minimum dominating set is the \emph{domination number}  $\gamma(G)$. A \emph{packing} of $G$ is a subset $U \subseteq V$ such that for any distinct $u, u' \in U$, $N[u]$ and $N[u']$ are disjoint; the size of a maximum packing is the \emph{packing number} $\rho(G)$. 

\paragraph{Aggregation.}
We consider the \emph{integer aggregation} problem. 
Let each individual have a value $x_i \in \{0, \dots, \Delta\}$. The goal is to compute an estimate $\ta$ of $a = \sum_{i=1}^n x_i$. We measure the \emph{mean-square error} (MSE), which is defined as $\E[(\ta - a)^2]$,  where the expectation is over the randomness of the protocol.
In central DP, the standard Laplace mechanism~\cite{dwork2006calibrating} achieves an error of $2\Delta^2/\eps^2$. In local DP, the local version of the Laplace mechanism achieves an error of $2\Delta^2 n / \eps^2$. Both of these are known to be asymptotically optimal.

\subsection{Algorithm via Dominating Set}\label{subsec:dom}

We start by giving a protocol for the integer aggregation problem using the graph's dominating set.

\begin{theorem} \label{thm:domset-protocol}
There is an $(\eps, G)$-TGDP mechanism for the aggregation problem with MSE at most $2\Delta^2 |T| / \eps^2$, where $T$ is any dominating set of $G$.
\end{theorem}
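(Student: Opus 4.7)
The plan is to realize the informal description in the overview: given a dominating set $T$, route every input to a trusted aggregator in $T$, have each aggregator release a noisy sub-aggregate via the central discrete Laplace mechanism, and sum these sub-aggregates. Concretely, fix a map $\phi\colon V\to T$ with $\phi(v)\in T\cap N[v]$ for every $v$ (with $\phi(v)=v$ whenever $v\in T$); this is well defined because $T$ dominates $G$. The protocol has two rounds. In round $1$, each $v$ sends $x_v$ point-to-point to $\phi(v)$. In round $2$, each $u\in T$ independently draws $Z_u\sim\DLap(\Delta/\eps)$ and broadcasts $s_u:=\sum_{w:\phi(w)=u} x_w+Z_u$ to every other vertex. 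The released estimate is $\tilde a:=\sum_{u\in T} s_u$.

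For privacy, I would fix $v$ and $\bx_{-v}$ arbitrarily and show that the only quantity in the joint transcript of $V\setminus N[v]$ that depends on $x_v$ is the broadcast $s_{\phi(v)}$. The round-$1$ message carrying $x_v$ is delivered only to $\phi(v)\in N[v]$, so it is not seen by anyone in $V\setminus N[v]$; any round-$1$ message from $w\ne v$ depends only on $x_w$ and hence is independent of $x_v$ once $\bx_{-v}$ is fixed; and each other broadcast $s_u$ with $u\in T\setminus\{\phi(v)\}$ is a function of $\bx_{-v}$ and independent noise, because $v\notin\{w:\phi(w)=u\}$. Writing $s_{\phi(v)}=x_v+C+Z_{\phi(v)}$ with $C:=\sum_{w\ne v,\,\phi(w)=\phi(v)} x_w$ independent of $x_v$, and noting $\DLap(\Delta/\eps)=\sNB(1,1-e^{-\eps/\Delta})$, Lemma~\ref{lem:snb-dp} with $r=1$ together with shift-invariance of $D_\infty$ yields $D_{\infty}(s_{\phi(v)}(x_v)\,\|\,s_{\phi(v)}(x'_v))\le\eps$. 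Since $\VIEW^{V\setminus N[v]}_P$ is a randomized post-processing of $s_{\phi(v)}$ and quantities that do not depend on $x_v$, Lemma~\ref{lem:postp} upgrades this to $\eps$-DP of the whole view, establishing $(\eps,G)$-TGDP.

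For utility, since $\{\,w:\phi(w)=u\,\}_{u\in T}$ partitions $V$, the sub-aggregates telescope: $\tilde a=\sum_{u\in T}\sum_{w:\phi(w)=u}x_w+\sum_{u\in T}Z_u=a+\sum_{u\in T}Z_u$. The $Z_u$ are independent and mean-zero, so $\E[(\tilde a-a)^2]=|T|\cdot\mathrm{Var}(\DLap(\Delta/\eps))\le 2\Delta^2|T|/\eps^2$, using the standard bound $\mathrm{Var}(\DLap(b))\le 2b^2$ (which follows from the closed form $\mathrm{Var}(\DLap(b))=2e^{-1/b}/(1-e^{-1/b})^2$ and the elementary inequality $1-e^{-y}\ge y\,e^{-y/2}$, checked by setting $t=e^{-y/2}$ and using $t-1-\ln t\ge 0$).

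The substantive content of the proof is really the routing rule $\phi(v)\in N[v]$: it simultaneously keeps $v$'s raw input out of the adversary's view and ensures that exactly one noisy sub-aggregate depends on $x_v$. Once this is in place, privacy reduces immediately to the central discrete Laplace guarantee plus post-processing, and the MSE bound is a short variance calculation; I expect no real obstacle beyond the bookkeeping needed to identify which messages in the transcript carry information about $x_v$.
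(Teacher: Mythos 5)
Your proposal is correct and follows essentially the same route as the paper's proof: the same routing rule $u_v \in T \cap N[v]$, the same noisy broadcast of sub-aggregates with $\DLap(\Delta/\eps)$ noise, the same reduction of the adversary's view to a post-processing of $x_v + z_{u_v}$ via Lemmas~\ref{lem:postp} and~\ref{lem:snb-dp}, and the same variance calculation for the MSE. The only difference is that you spell out the bound $\mathrm{Var}(\DLap(b)) \le 2b^2$ explicitly, which the paper takes as known.
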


\begin{proof}
The protocol works as follows:
\begin{itemize}
\item First, each user $v \in V$ picks an arbitrary vertex $u_v \in T \cap N[v]$. (The intersection is not empty since $T$ is a dominating set.) Then, the user sends $x_v$ to $u_v$.
\item Each user $u \in T$ broadcasts the sum of all numbers  it receives together with a noise drawn from $\DLap(\Delta/\eps)$. 
More formally, the user broadcasts $a_u = \sum_{v \in V \atop u_v = u} x_v + z_u$, where $z_u \sim \DLap(\Delta/\eps)$.
\item Finally, the estimate is $\ta = \sum_{u \in T} a_u$.
\end{itemize}

\paragraph{Privacy Analysis.}
Consider any $v \in V$ and $\bx_{-v} \in \{0, \dots, \Delta\}^{V \setminus \{v\}}$. We write $\ba$ as a shorthand for $(a_u)_{u \in T}$. 
Let $S_v := \{w \in N[v] \mid u_w \in N[v]\}$ denote the nodes in $N[v]$ whose message in the first step is sent to a node in $N[v]$.
Notice that $\VIEW^{V \setminus N[v]}_P(x)$ is exactly $(\bx_{-S_v}, \ba(x))$.
We claim that this is a post-processing of $z_{u_v} + x$. This is simply because $\bx_{-S_v}, (a_u)_{u \in T \setminus \{u_v\}}$ do not depend on $x_v = x$ at all and are independent of $z_{u_v} + x$; finally, note that $a_{u_v}(x)$ is a post-processing of $z_{u_v} + x$ since $a_{u_v}(x) = \left(z_{u_v} + x\right) + \sum_{v' \in V \setminus \{v\} \atop u_{v'} = u_v} x_{v'}$. 

Consider any $x_v, x'_v \in \{0, \dots, \Delta\}$. By \Cref{lem:postp} and \Cref{lem:snb-dp}, we have
\begin{equation*}
D_{\infty}\left(\VIEW^{V \setminus N[v]}_P(x_v) ~\middle\|~ \VIEW^{V \setminus N[v]}_P(x'_v)\right) \leq D_{\infty}(z_{u_v} + x_v ~\|~ z_{u_v} + x'_v) \leq \eps,
\end{equation*}
where the second inequality is due to \Cref{lem:snb-dp}.  Thus, the protocol satisfies $(\eps, G)$-TGDP as desired.

\paragraph*{Utility Analysis.}
The MSE is $\E\left[\left(\ta - a\right)^2\right] 
= \sum_{u \in T} \E[z_u^2]
\leq |T| \cdot \frac{2\Delta^2}{\eps^2}.$
\end{proof}

\subsection{Improved Algorithm via Linear Programming}

A disadvantage of the protocol from Section~\ref{subsec:dom} is that to minimize the error, it requires the knowledge of a minimum dominating set. Computing minimum dominating set is  NP-hard and even hard to approximate~\cite{Feige98}. In this section, we give a protocol that is efficient to compute and furthermore can reduce the error by up to $O(\log n)$ factor in certain graphs.  To describe our protocol, recall the linear programming (LP) relaxation of the dominating set problem:
\begin{equation}\label{eq:origlp}
\min \;\;\sum_{u \in V} y_u \qquad \text{s.t.}\;\; \sum_{u \in N[v]} y_u \geq 1 \quad \forall v \in V; \qquad 0 \leq y_u \leq 1 \quad \forall u \in V.
\end{equation}
To see that this is a relaxation of the dominating set problem, note that any dominating set $T \subseteq V$ gives a solution by setting $y_v = \ind[v \in T]$. Due to this, the optimum of this LP is no more than the size of the dominating set. In fact, the LP optimum can be smaller than the minimum dominating set size by an $O(\log n)$ factor~\cite{Lovasz75}.

The main result is a protocol whose MSE scales with the LP optimum instead of dominating set:

\begin{theorem}\label{thm:lp-protocol}
There is an $(\eps, G)$-TGDP mechanism for the aggregation problem with MSE at most $2\Delta^2\cdot\OPT_{\LP}/\eps^2$, where  $\OPT_{\LP}$ denotes the value of the optimal solution to the LP in \eqref{eq:origlp}.
\end{theorem}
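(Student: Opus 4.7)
The plan is to realize the two ideas flagged in the technical overview — input splitting into uniform additive shares, and distributed $\sNB$ noise weighted by an LP solution — and then verify privacy and utility. First I would solve the LP in~\eqref{eq:origlp} to get a fractional optimum $(y^*_u)_{u \in V}$ with $\sum_u y^*_u = \OPT_{\LP}$ and $\sum_{u \in N[v]} y^*_u \geq 1$ for every $v$, and set $p := 1 - e^{-\eps/\Delta}$. The protocol is: (i) each user $v$ samples uniformly random additive shares $(x_{v \to u})_{u \in N[v]}$ over a sufficiently large group $\mathbb{Z}_q$ conditioned on $\sum_{u \in N[v]} x_{v \to u} = x_v$, and privately sends $x_{v \to u}$ to each $u \in N[v]$; (ii) each user $u$ then samples $z_u \sim \sNB(y^*_u, p)$ independently and broadcasts $a_u := \sum_{w\,:\,u \in N[w]} x_{w \to u} + z_u$; (iii) the estimator is $\ta := \sum_{u \in V} a_u$, with $q$ chosen large enough that modular reduction is faithful to the intended integer arithmetic.

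Utility is immediate: since each user's shares sum deterministically to that user's input, $\ta = a + \sum_{u \in V} z_u$, so $\E[\ta] = a$ and $\mathrm{MSE} = \sum_u \Var(z_u) = \sum_u 2 y^*_u (1-p)/p^2 \leq 2\OPT_{\LP}\,\Delta^2/\eps^2$, reusing the variance bound on $\DLap(\Delta/\eps)$ invoked in the proof of Theorem~\ref{thm:domset-protocol}. For privacy, I fix $v \in V$ and any $\bx_{-v}$, and condition on everything in $\VIEW^{V \setminus N[v]}_P(\bx)$ that does not depend on $x_v$ (the adversary's own inputs, the shares emanating from users $w \neq v$, and the noises $z_u$ for $u \notin N[v]$). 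After this conditioning, the only remaining $x_v$-dependent object is $(a_u)_{u \in N[v]} = (x_{v \to u} + z_u + C_u)_{u \in N[v]}$ with each $C_u$ a deterministic constant, and by post-processing (Lemma~\ref{lem:postp}) together with the translation-invariance of $D_\infty$, the task reduces to bounding $D_\infty(T(x_v) \,\|\, T(x'_v))$ where $T(x) := (x_{v \to u} + z_u)_{u \in N[v]}$ and $(x_{v \to u})_u$ is uniform on shares summing to $x$.

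The heart of the argument is the convolution identity
\begin{equation*}
\Pr\bigl[T(x) = (t_u)_u\bigr] \;=\; q^{-(|N[v]|-1)} \cdot \Pr\Bigl[\textstyle\sum_{u \in N[v]} z_u \;=\; \sum_{u \in N[v]} t_u - x\Bigr],
\end{equation*}
obtained by summing over uniform share assignments and using independence of the $z_u$'s; it says the law of $T(x)$ depends on $x$ only through the coordinate sum. Consequently $D_\infty(T(x_v) \,\|\, T(x'_v)) = D_\infty\!\bigl(x_v + Z \,\|\, x'_v + Z\bigr)$ for $Z := \sum_{u \in N[v]} z_u$; by the $\sNB$ additivity recorded in Section~\ref{sec:prelim} and the LP constraint $\sum_{u \in N[v]} y^*_u \geq 1$, we have $Z \sim \sNB(r, p)$ with $r \geq 1$, and Lemma~\ref{lem:snb-dp} delivers $D_\infty \leq \eps$ and hence $(\eps, G)$-TGDP.

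The main obstacle I expect is precisely this sum-only dependence of $T(x)$. Because individual $y^*_u$ can be much smaller than $1$, no single broadcast is private on its own, and privacy must emerge from aggregating noise over all of $N[v]$; the input-splitting step is exactly what buys this, since uniform additive shares make each marginal $x_{v \to u}$ independent of $x_v$ and force the adversary to sum every broadcast in $N[v]$ to learn anything about the input. A minor technical point is choosing the sharing modulus $q$ large enough (e.g., super-polynomial in $n$, $\Delta$, and $1/\eps$) so that modular reduction does not interfere with the utility calculation.
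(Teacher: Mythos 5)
Your proposal is correct and follows essentially the same route as the paper's proof: uniform additive secret sharing over $\Z_q$, per-vertex $\sNB(y^*_u, 1-e^{-\eps/\Delta})$ noise, reducing the non-neighbors' view to a post-processing of $x_v + \sum_{u \in N[v]} z_u$ via the sum-only dependence of the shares, and applying Lemma~\ref{lem:snb-dp} using the LP constraint $\sum_{u \in N[v]} y^*_u \geq 1$. The only cosmetic difference is the treatment of modular wraparound: the paper fixes $q = 2n\Delta$ and bounds $|a - \ta| \leq |\sum_{u} z_u|$ directly from the ranges of $a$ and $\ta$, rather than taking $q$ super-polynomially large.
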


\begin{proof}
Let $\by = (y_u)_{u \in V}$ denote any solution to the LP in \eqref{eq:origlp}.
The protocol works as follows:
\begin{itemize}
\item Let $q = 2n\Delta$.
\item For every user $v \in V$, pick $\{s^u_v\}_{u \in N[v]} \subseteq \Z_q$ uniformly at random among those that satisfy $\sum_{u \in N[v]} s^u_v \equiv x_v \mod q$. Then, for every $u \in N[v]$, user $v$ sends $s^u_v$ to $u$.
\item For every $u \in V$, sample $z_u \sim \sNB(y_u, 1 - e^{-\eps/\Delta})$; broadcast $a_u \equiv z_u + \sum_{v \in N[u]} s^u_v \mod q$.
\item Compute $a' \equiv \sum_u a_u \mod q$. Then, output
$
\ta =
\begin{cases}
a' & \text{ if } a' \leq q/2, \\
a' - q &\text{ otherwise.}
\end{cases}
$
\end{itemize}

\paragraph*{Privacy Analysis.}
Throughout the analysis, we assume that the addition is modulo $q$ unless stated otherwise.
Consider any $v \in V$ and $\bx_{-v} \in \{0, \dots, \Delta\}^{V \setminus \{v\}}$. We write $\ba$ as a shorthand for $(a_u)_{u \in V}$.
Notice that $\VIEW^{V \setminus N[v]}_P(x)$ is exactly $(\bx_{-N[v]}, \ba(x), (s^u_{v'})_{u \in V \setminus N[v], v' \in V})$.
We claim that this is a post-processing of $(z_u + s^u_v)_{u \in N[v]}$. This is simply because $\bx_{-N[v]}, (a_u)_{u \in V \setminus N[v]}, (s^u_{v'})_{u \in V \setminus N[v], v' \in V}$ do not depend on $x_v = x$ at all and are independent of $(z_u + s^u_v)_{u \in N[v]}$; finally, note that $(a_u(x))_{u \in N[v]}$ is a post-processing of $(z_u + s^u_v)_{u \in N[v]}$ since $a_{u}(x) = \left(z_u + s^u_v\right) + \sum_{v' \in N[v] \setminus \{v\} \atop u_{v'} = u} s^u_{v'}$ for all $u \in N[v]$.

For any $x_v, x'_v \in \{0, \dots, \Delta\}$, \Cref{lem:postp} implies that
%
\begin{equation*}
D_{\infty}\left(\VIEW^{V \setminus N[v]}_P(x_v) ~\middle\|~ \VIEW^{V \setminus N[v]}_P(x'_v)\right) \leq D_{\infty}((z_u + s^u_v(x_v))_{u \in N[v]} ~\|~ (z_u + s^u_v(x'_v))_{u \in N[v]}).
\end{equation*}
Now, since $(s^u_v(x))_{u \in N[v]}$ are random elements of $\Z_q$ that sum to $x$, we also have that $(z_u + s^u_v(x))_{u \in N[v]}$ are random elements of $\Z_q$ that sum to $x + \sum_{u \in N[v]} z_u$. In other words, $(z_u + s^u_v(x_v))_{u \in N[v]}$ is a post-processing of $x + \sum_{u \in N[v]} z_u$. Again,  \Cref{lem:postp} implies that
%
\begin{equation*}
D_{\infty}((z_u + s^u_v(x_v))_{u \in N[v]}~\|~(z_u + s^u_v(x'_v))_{u \in N[v]}) \leq D_{\infty}\left(x_v + \sum_{u \in N[v]} z_u~\middle\|~x'_v + \sum_{u \in N[v]} z_u\right).
\end{equation*}
Finally, $Z := \sum_{u \in N[v]} z_u$ is distributed as $\sNB\left(\sum_{u \in N[v]} y_u, 1 - e^{-\eps/\Delta}\right)$. 
Since $\by$ is feasible in \eqref{eq:origlp},
we have $\sum_{u \in N[v]} y_u \geq 1$. Thus, we can apply \Cref{lem:snb-dp} to conclude that the RHS above is $\le \eps$.

\paragraph*{Utility Analysis.} 
Note $\ta \equiv a + \left(\sum_{u \in V} z_u\right) \mod q$.
Since $a \in [0, q/2]$ and $\ta \in (-q/2, q/2]$, we have $|a - \ta| \le \left|\sum_{u \in V} z_u\right|$. Thus, the MSE is $\leq 
\sum_{u \in T} \E[z_u^2] \leq \sum_{u \in T} \frac{2y_u}{(\eps/\Delta)^2} = \frac{2 \Delta^2 \OPT_\LP}{\eps^2}$,
where the last equality is from our assumption that $(y_u)_{u \in V}$ is an optimal solution to the LP in \eqref{eq:origlp}. 
\end{proof}

We remark that, in the proof above, the privacy guarantee holds even for non-optimal LP solution $\by$, as long as it satisfies the constraints. Similarly, the error guarantee holds where $\OPT_{\LP}$ is replaced with the objective value of the solution. This is helpful for practical applications where we may only have an approximately optimal LP solution.

\subsection{Lower Bound}

We now give a lower bound for integer aggregation, where the MSE grows with the packing number.

\begin{theorem}
\label{thm:tgdplb}
For any $\eps \leq O(1)$, any $(\eps, G)$-TGDP protocol for integer aggregation incurs MSE $\Omega(\Delta^2 \cdot \optind)$, where $\optind$ denotes the packing number of the trust graph $G$. 
\end{theorem}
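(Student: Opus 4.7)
The plan is to reduce any $(\eps,G)$-TGDP integer aggregation protocol to an $\eps$-LDP integer aggregation protocol on exactly $\optind$ users while preserving MSE, and then invoke the classical LDP lower bound of $\Omega(\Delta^2 k/\eps^2)$ for aggregating $k$ values in $\{0,\dots,\Delta\}$ under $\eps$-LDP~\citep{ChanSS12} (the optimality statement quoted in \Cref{sec:trust_graph_DP}). Since $\eps\le O(1)$, this yields the desired $\Omega(\Delta^2\,\optind)$ bound.

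For the reduction, I would fix a maximum packing $U=\{u_1,\dots,u_k\}$ of $G$, so $k=\optind$ and $N[u_i]\cap N[u_j]=\emptyset$ for all distinct $i,j$. Given a $(\eps,G)$-TGDP protocol $P$, I would construct an $\eps$-LDP protocol $P'$ on $k$ users as follows: LDP user $i$ holds input $x^*_i\in\{0,\dots,\Delta\}$, and the $k$ users jointly simulate one execution of $P$ on $G$ in which vertex $u_i$ carries value $x^*_i$ and every other vertex carries value $0$. LDP user $i$ plays the role of every vertex in $N[u_i]$, while a designated user (say user $1$) plays the role of the leftover vertices $W:=V\setminus\bigcup_i N[u_i]$. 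Messages in $P$ between two simulated vertices hosted by the same LDP user are handled locally; all other messages are relayed through the interactive LDP channel. The estimate of $P'$ is the estimate of the simulated $P$.

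The privacy argument is where the packing property is used. For any fixed $i\in[k]$, the disjointness of closed neighborhoods of packing vertices, together with $W\cap N[u_i]=\emptyset$, ensures that every vertex simulated by an LDP user other than $i$ lies in $V\setminus N[u_i]$. Consequently, the combined view of the LDP users other than $i$ in $P'$ is a deterministic function of $\VIEW^{V\setminus N[u_i]}_P(\bx)$; since $P$ is $(\eps,G)$-TGDP, this view is $\eps$-DP with respect to $x_{u_i}=x^*_i$, and \Cref{lem:postp} transfers this guarantee to $\eps$-LDP of $P'$. For utility, only $u_1,\dots,u_k$ carry nonzero values, so $\sum_v x_v=\sum_i x^*_i$ and the MSE of $P'$ equals that of $P$.

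The main technical obstacle, beyond bookkeeping in the simulation, is cleanly verifying the view containment in the privacy step—namely, that every message routed over the LDP channel between two LDP users is already accounted for in the corresponding TGDP vertex's view, and that the publicly computed estimate is itself a post-processing of the union of these views. A secondary point worth double-checking is that the LDP aggregation lower bound we invoke applies to the \emph{interactive} LDP model of \Cref{def:int-ldp} (not just the non-interactive case); this is standard but should be stated carefully. Once these points are pinned down, the bound follows immediately from the reduction and the cited LDP lower bound.
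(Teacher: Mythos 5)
Your proposal is correct and follows essentially the same route as the paper: both reduce to an $\eps$-LDP protocol on $\optind$ users by partitioning $V$ into blocks containing the disjoint closed neighborhoods $N[u_i]$ (the paper distributes the leftover vertices $W$ arbitrarily among the blocks rather than assigning them all to user~1, an immaterial difference), setting all non-packing inputs to $0$, and using post-processing of $\VIEW^{V\setminus N[u_i]}_P$ to transfer the TGDP guarantee, before invoking the LDP aggregation lower bound of~\citep{ChanSS12}. The view-containment and interactivity caveats you flag are handled in the paper exactly as you anticipate.
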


In fact, we give the following reduction that transforms any TGDP protocol to an LDP protocol with the same privacy parameter and MSE, but only on $\optind$ users (instead of $n$ users). Applying the known $\Omega(\Delta^2 n)$ lower bound for integer aggregation in LDP~\cite{ChanSS12}\footnote{Note that~\citep{ChanSS12} state their lower bound for $\Delta = 1$ but the case $\Delta > 1$ follows by scaling up the input.} immediately yields \Cref{thm:tgdplb}.

\begin{lemma} \label{lem:red-lb}
Suppose that there is an $(\eps, G)$-TGDP protocol for integer aggregation. Then, there exists an $\eps$-local DP protocol for integer aggregation for $\optind$ users with the same MSE as the $(\eps, G)$-TGDP protocol, where $\optind$ denotes the packing number of $G$.
\end{lemma}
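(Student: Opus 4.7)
The plan is to build an $\eps$-LDP protocol $P'$ on $\rho(G)$ users by \emph{simulating} the given TGDP protocol $P$ on a padded input. First, fix a maximum packing $U \subseteq V$, so that $|U| = \rho(G)$ and the closed neighborhoods $\{N[u]\}_{u \in U}$ are pairwise disjoint. Choose any partition $V = \bigsqcup_{u \in U} V_u$ with $N[u] \subseteq V_u$ for every $u \in U$; such a partition exists because the $N[u]$ are already disjoint, and the remaining vertices $V \setminus \bigcup_{u\in U} N[u]$ can be assigned to the blocks arbitrarily.

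Index the $\rho(G)$ LDP users by $U$ and give user $u$ input $x_u \in \{0,\dots,\Delta\}$. In $P'$, each user $u$ locally simulates every trust-graph vertex in $V_u$, assigning the true input $x_u$ to vertex $u$ and the input $0$ to every other vertex in $V_u$, while drawing all of the required random coins independently. Any message of $P$ between two vertices in the same block $V_u$ is handled entirely locally by user $u$; a message of $P$ from a vertex $v \in V_{u_1}$ to a vertex $v' \in V_{u_2}$ with $u_1 \ne u_2$ is transmitted as an LDP message from user $u_1$ to user $u_2$ (carrying the source and destination vertex labels). When $P$ terminates, the LDP users output its estimate as $\ta$. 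Since every vertex outside $U$ is padded with $0$, the quantity estimated by $P$ satisfies $\sum_{v\in V} x_v = \sum_{u\in U} x_u$, which is precisely the LDP aggregation target, so $\ta$ has exactly the same MSE as $P$ on this padded input.

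For privacy, fix any $u \in U$ and any $(x_{u'})_{u'\in U\setminus\{u\}}$; I must show that the joint view of the LDP users other than $u$ is $\eps$-DP with respect to $x_u$. This joint view is entirely determined by $P'$'s simulations at the vertices in $V \setminus V_u$, together with the messages received or sent by those vertices. Since $N[u] \subseteq V_u$, we have $V \setminus V_u \subseteq V \setminus N[u]$, so each such quantity is already part of $\VIEW^{V \setminus N[u]}_P(\bx)$ on the padded input $\bx$. Thus the joint view is a deterministic function of $\VIEW^{V \setminus N[u]}_P(\bx)$, and applying the $(\eps,G)$-TGDP guarantee together with post-processing (\Cref{lem:postp}) yields $\eps$-DP with respect to $x_u$, which is exactly what $\eps$-LDP requires.

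The only real subtlety I expect is the bookkeeping in the privacy step: one must verify that no message or internal state from the simulation of $N[u]$ leaks outside user $u$ in $P'$. This reduces to the two-line observation that messages of $P$ with both endpoints in $V_u$ are handled locally by user $u$ (and so never appear in any other user's view), while messages with at least one endpoint outside $V_u$ automatically sit inside $\VIEW^{V \setminus N[u]}_P(\bx)$ because that endpoint lies in $V \setminus N[u]$. Once this is in place, the lemma follows immediately, and combining it with the $\Omega(\Delta^2 n)$ LDP lower bound of \cite{ChanSS12} gives \Cref{thm:tgdplb}.
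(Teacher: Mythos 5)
Your proposal is correct and follows essentially the same route as the paper's proof: partition $V$ into blocks each containing one closed neighborhood $N[u]$ of the packing, have each LDP user simulate its block with the true input at $u$ and zeros elsewhere, and derive privacy from the containment $V \setminus V_u \subseteq V \setminus N[u]$ together with post-processing. The extra bookkeeping you flag about message routing is handled implicitly in the paper via the identity $\VIEW^{[m]\setminus\{i\}}_{\tP}(\tx) = \VIEW^{V\setminus Q_i}_P(\bx(\tx))$, but your argument is the same in substance.
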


\begin{proof}
Let $U = \{u_1, \dots, u_m\} \subseteq V$ be the largest packing in $G$ where $m = \optind$. To avoid ambiguity, let $\tbx = (\tx_1, \dots, \tx_m)$ be the input to the LDP protocol (that we construct below).

To construct the LDP protocol, let $Q_1 \cup \cdots \cup Q_m$ be any partition of $V$ such that $N[u_i] \subseteq Q_i$ for all $i \in [m]$. Such a partition exists because $N[u_1], \dots, N[u_m]$ are disjoint by the definition of packing. Let $P$ be any $(\eps, G)$-TGDP protocol for integer aggregation. Our LDP protocol $\tP$ runs the protocol $P$ where each $\tP$'s user $i \in [m]$ assumes the role of all $P$'s users in $Q_i$, where the input to $P$ is defined as
$
x_u =
\begin{cases}
\tx_i &\text{ if } u = u_i \\
0 &\text{ otherwise,}
\end{cases}
\quad \forall u \in Q_i.$
We then output the estimate as produced by $P$. The MSE of $\tP$ is obviously the same as that of $P$.

To see that $\tP$ satisfies $\eps$-LDP, consider any $i \in [m], \tbx_{-i} \in \cX^{[m] \setminus \{i\}}$, we have $\VIEW^{[m] \setminus \{i\}}_{\tP}(\tx) = \VIEW^{V \setminus Q_i}_P(\bx(\tx))$, where $\bx(\tx)$ is the input to $P$ as defined above. Since $V \setminus Q_i \subseteq V \setminus N[u_i]$, $\VIEW^{V \setminus Q_i}_P(\bx(\tx))$ is a post-processing of $\VIEW^{V \setminus N[u_i]}_P(\bx(\tx))$, 
\Cref{lem:postp} implies that
%
\begin{equation*}
D_{\infty}\left(\VIEW^{[m] \setminus \{i\}}_{\tP}(\tx_i) ~\middle\|~ \VIEW^{[m] \setminus \{i\}}_{\tP}(\tx'_i)\right) \leq D_{\infty}\left(\VIEW^{V \setminus N[u_i]}_P(\tx_i) ~\middle\|~ \VIEW^{V \setminus N[u_i]}_P(\tx'_i)\right) 
\leq \eps,
\end{equation*}
where the last inequality is due to $P$ being an $(\eps, G)$-TGDP protocol.  Hence, $\tP$ is $\eps$-LDP.
\end{proof}
Unfortunately, the lower bound in~\Cref{thm:tgdplb} is not tight with respect to the upper bounds in~\Cref{thm:domset-protocol,thm:lp-protocol}.  Indeed, the following is a example of a graph that has a large gap between the domination number and the packing number~\cite{Burger}.  Let $V = [k] \times [k]$ for $k \in \mathbb{N}$. There is an edge between any $(x, y) \in V, (x', y') \in V$ iff $x = x'$ or $y = y'$. For this graph, $\OPT_{\LP}$ is
$\Omega(\sqrt{|V|})$  since every vertex has degree $O(k) = O(\sqrt{|V|})$ whereas the maximal packing has size exactly one (see \Cref{fig:network_large_gap} for $k=4$). We can also show that this instance exhibits an asymptotically optimal gap:

\begin{theorem} \label{thm:domset-v-packing}
For any graph $G$, $\OPT_{\LP} \leq \packnum(G) \cdot \sqrt{n}$ where $\OPT_{\LP}$ denote the value of the optimal solution to the LP in \eqref{eq:origlp}.
\end{theorem}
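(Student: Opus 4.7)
The plan is to construct an explicit feasible solution $\mathbf{y}$ to the LP in \eqref{eq:origlp} whose cost is $O(\rho(G)\cdot\sqrt{n})$, via a two-level (integral + fractional) scheme. Split the vertices by closed-degree at the threshold $\sqrt{n}$: put $L := \{v \in V : |N[v]| < \sqrt{n}\}$ and $H := V \setminus L$. The high-degree side requires no further work, because assigning mass $1/\sqrt{n}$ to every vertex already satisfies $\sum_{u \in N[v]} y_u \ge |N[v]|/\sqrt{n} \ge 1$ for all $v \in H$. So the heart of the proof is to dominate $L$ cheaply.

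To that end, take a \emph{maximal} packing $P_L$ whose members all lie in $L$, built greedily by adding an $L$-vertex whenever possible. By maximality restricted to $L$, for every $v \in L$ there must exist $u \in P_L$ with $N[v] \cap N[u] \neq \emptyset$ (otherwise $P_L \cup \{v\}$ would still be a packing inside $L$). Hence
\[
T_L := \bigcup_{u \in P_L} N[u]
\]
is a dominating set for $L$. Two facts combine to control $|T_L|$: first, $P_L$ is a packing in $G$, so $|P_L| \le \rho(G)$; second, each $u \in P_L$ lies in $L$, so $|N[u]| < \sqrt{n}$. Therefore $|T_L| \le |P_L|\cdot \sqrt{n} \le \rho(G)\cdot\sqrt{n}$.

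Now define $y_u = 1$ for $u \in T_L$ and $y_u = 1/\sqrt{n}$ otherwise. Feasibility is immediate: for $v \in L$, some $w \in T_L \cap N[v]$ contributes $y_w = 1$; for $v \in H$, the fractional layer alone already gives $\sum_{u \in N[v]} y_u \ge |N[v]|/\sqrt{n} \ge 1$. The cost is
\[
\sum_u y_u \;=\; |T_L| + \frac{n - |T_L|}{\sqrt{n}} \;\le\; \rho(G)\sqrt{n} + \sqrt{n}.
\]

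The main obstacle is purely the constant: the construction gives $(\rho(G)+1)\sqrt{n}$ rather than exactly $\rho(G)\cdot\sqrt{n}$. This extra $\sqrt{n}$ can be absorbed by the dichotomy: if $L \neq \emptyset$ then $\rho(G) \ge 1$ (any singleton in $L$ is a packing), so the bound is $O(\rho(G)\sqrt{n})$; and if $L = \emptyset$ then $|T_L|=0$ and the total cost is just $\sqrt{n} \le \rho(G)\cdot\sqrt{n}$. A small sharpening—using the integrality $|N[u]| \le \lfloor \sqrt{n}\rfloor$ (or moving the threshold slightly) to save the additive $\sqrt{n}$—recovers the clean stated bound, and in any case the construction already witnesses the asymptotic tightness against the $[k]\times[k]$ grid example preceding the theorem.
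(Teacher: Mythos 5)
Your construction is a genuinely different route from the paper's: you exhibit an explicit feasible primal solution (integral mass on the closed neighborhoods of a maximal packing inside the low-degree set $L$, plus a uniform fractional layer of $1/\sqrt{n}$), whereas the paper passes to the dual LP, runs the minimum-degree greedy packing algorithm of Halldorsson et al., and charges the optimal dual mass to the greedy packing elements. Your feasibility check and the bound $|T_L| \le |P_L|\sqrt{n} \le \rho(G)\sqrt{n}$ are correct, so you do obtain $\OPT_{\LP} \le (\rho(G)+1)\sqrt{n} \le 2\,\rho(G)\sqrt{n}$, which is enough for the downstream claim that the upper and lower bounds differ by $O(\sqrt{n})$.

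However, the final step---that a ``small sharpening'' recovers the stated constant $1$---is a genuine gap, and no sharpening of the \emph{analysis} can close it, because the \emph{construction} itself can cost more than $\rho(G)\sqrt{n}$. Take the star $K_{1,n-1}$ with $n=100$: here $\rho(G)=1$, every leaf lies in $L$ and the center lies in $H$, your $P_L$ is a single leaf, $T_L$ is that leaf together with the center, and the solution costs $2 + 98/10 = 11.8 > 10 = \rho(G)\sqrt{n}$ (whereas $\OPT_{\LP}=1$ for this graph). The additive $\sqrt{n}$ contributed by the fractional layer cannot be absorbed whenever the maximal packing inside $L$ already has size $\rho(G)$, so appealing to integrality of $|N[u]|$ or moving the threshold does not help. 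To get the constant $1$ you need a different argument; the paper's dual charging achieves it by partitioning \emph{all} of $V$ into at most $\rho(G)$ classes $Z_i$ and showing each class carries dual mass at most $\sqrt{n}$, leaving no leftover additive term.
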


In other words, our upper bound based on the LP (\Cref{thm:lp-protocol}) and  our lower bound (\Cref{thm:tgdplb}) on the MSE has a gap of at most $O(\sqrt{n})$. To the best of our knowledge, the bound in \Cref{thm:domset-v-packing} was not known before; we give the full proof in \Cref{app:packing-v-domset}.

Note that the above instance also gives a gap of $\Omega(\sqrt{|V|})$ between the domination number and the packing number. Since it is known~\cite{Lovasz75} that $\gamma(G) \leq O(\log n) \cdot \OPT_{\LP}$, \Cref{thm:domset-v-packing} implies the following corollary:
\begin{corollary}
For any graph $G$, $\gamma(G) \leq \packnum(G) \cdot O(\sqrt{n} \cdot \log n)$.
\end{corollary}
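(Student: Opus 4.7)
The plan is to combine LP duality with a randomized rounding of the dual solution. I would first observe that the constraint $y_u \leq 1$ in \eqref{eq:origlp} is redundant, since truncating any feasible $y$ via $y_u \mapsto \min(y_u,1)$ preserves every covering constraint and does not increase the objective. LP duality applied to the resulting relaxation then gives
\[
\OPT_{\LP} \;=\; \max\Bigl\{\sum_{v \in V} z_v \;:\; \sum_{v' \in N[u]} z_{v'} \leq 1\ \forall u \in V;\ z \geq 0\Bigr\}.
\]
Fix an optimal dual $z^*$, write $F := \OPT_{\LP} = \sum_v z^*_v$, and observe that setting $u=v$ in the dual constraint yields $z^*_v \leq 1$ and so $F \leq n$.

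The next step is to derive a quadratic bound on $z^*$ and round. Squaring $\sum_{v' \in N[u]} z^*_{v'} \leq 1$ and summing over $u$ gives $\sum_u \bigl(\sum_{v' \in N[u]} z^*_{v'}\bigr)^2 \leq n$; since each unordered pair $\{v,v'\}$ with $v \neq v'$ and $N[v]\cap N[v']\neq\emptyset$ contributes at least $2\,z^*_v z^*_{v'}$ to the expansion (twice, as ordered pairs, for any common neighbor $u$), we obtain
\[
\sum_{\{v,v'\}:\,v \neq v',\,N[v]\cap N[v']\neq\emptyset} z^*_v\, z^*_{v'} \;\leq\; n/2.
\]
Now I would sample each $v$ independently into a set $T$ with probability $\lambda z^*_v$, where $\lambda := F/n \in [0,1]$. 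Then $\E[|T|] = F^2/n$ and the expected number of ``bad'' pairs $\{v,v'\}\subseteq T$ with $N[v]\cap N[v']\neq\emptyset$ is at most $\lambda^2\cdot n/2 = F^2/(2n)$. Removing one vertex from each bad pair leaves a set whose closed neighborhoods are pairwise disjoint, i.e., a valid integer packing, of expected size at least $F^2/n - F^2/(2n) = F^2/(2n)$. Hence $\rho(G) \geq F^2/(2n)$, i.e., $F \leq \sqrt{2n\,\rho(G)}$, which immediately yields $F \leq \rho(G)\sqrt{n}$ whenever $\rho(G) \geq 2$.

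The remaining case $\rho(G)=1$, equivalent to $G$ having diameter $\leq 2$, requires a separate deterministic argument. For any vertex $u^*$, assigning each $v \notin N[u^*]$ to a common neighbor $w(v) \in N[u^*]\cap N[v]$ (which exists by the diameter bound) and using $\sum_{v' \in N[w]} z^*_{v'}\leq 1$, a double-count over $w \in N[u^*]$, together with the observation that every $v \in N[u^*]$ lies in $N[u^*]\cap N[v]$, yields the cancellation $\sum_{v \notin N[u^*]} z^*_v \leq |N[u^*]| - \sum_{v \in N[u^*]} z^*_v$, whence $F \leq |N[u^*]|$. Combining this with the degree-weighted inequality $F\cdot \min_u |N[u]| \leq \sum_v z^*_v |N[v]| = \sum_u \sum_{v' \in N[u]} z^*_{v'} \leq n$ and optimizing $u^*$ gives $F\leq\sqrt{n}$, completing the proof. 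The main obstacle is sharpening the constant in the rounding step: the naive analysis (counting ordered pairs) only gives $F \leq 2\sqrt{n\,\rho(G)}$, which fails the $\rho(G)=1$ case, so the unordered-pair refinement together with the separate diameter-$2$ analysis is exactly what makes the bound tight.
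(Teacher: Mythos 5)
Your argument is mathematically sound, but it takes a genuinely different route from the paper's. The paper obtains this corollary by combining \Cref{thm:domset-v-packing} (i.e., $\OPT_{\LP} \leq \packnum(G)\cdot\sqrt{n}$) with the cited bound $\gamma(G) \leq O(\log n)\cdot \OPT_{\LP}$, and it proves \Cref{thm:domset-v-packing} deterministically via the minimum-degree greedy algorithm of Halldorsson et al.: each greedily chosen vertex $v_i$ removes a conflict set $Z_i$, and a two-case analysis on $\deg(v_i)$ versus $\sqrt{n}$ charges the dual LP mass of each $Z_i$ against $\sqrt{n}$. You instead round the dual solution randomly: squaring and summing the fractional packing constraints bounds the total dual weight on conflicting pairs by $n/2$, and the alteration step (delete one endpoint per surviving conflict) gives $\packnum(G) \geq F^2/(2n)$, hence $F \leq \sqrt{2n\,\packnum(G)}$. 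This is cleaner and more modular than the greedy charging argument, at the cost of a $\sqrt{2}$ factor, which you repair with a separate (and correct) analysis of the $\packnum(G)=1$, i.e., diameter-$\leq 2$, case. Note, though, that for the corollary as stated the case split is unnecessary: $\sqrt{2n\,\packnum(G)} \leq \sqrt{2}\cdot\packnum(G)\cdot\sqrt{n}$ already holds for all $\packnum(G) \geq 1$, and the $O(\cdot)$ absorbs the constant; the extra work only matters if you want the constant-free bound of \Cref{thm:domset-v-packing} itself.

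One genuine omission: your proof ends with a bound on $F = \OPT_{\LP}$, but the corollary bounds the integral domination number $\gamma(G)$. You still need the standard integrality-gap bound $\gamma(G) \leq O(\log n)\cdot \OPT_{\LP}$ (greedy set cover; the Lov\'asz bound cited in the paper), which is precisely where the $\log n$ factor in the statement comes from. As written, your argument never mentions $\gamma(G)$, so you should add this one-line final step to actually reach the claimed conclusion.
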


That is, the above gap instance is tight up to a logarithmic factor. Furthermore, this also means that our upper bound based on the dominating set (\Cref{thm:domset-protocol}) and our lower bound (\Cref{thm:tgdplb}) on the MSE has a gap of at most $O(\sqrt{n})$.
\newcommand{\oT}{\overline{T}}

\section{Robust Trust Graph Differential Privacy}\label{ref:robust_trust_graph_DP}


In the previous section, we assumed  that each user $u$ trusts all of their neighbors $N(u)$. Although this is certainly a reasonable assumption, it might pose a security risk. For example, if one of the neighbors of $u$ is compromised, then $u$'s data might be leaked as the model offers no protection with respect to the view of $u$'s neighbors. Indeed, in the dominating set protocol (\Cref{thm:domset-protocol}), the user sends their raw data to one of their neighbors; if this neighbor is compromised, then the user's data is leaked in the clear. To mitigate this, we propose a revised trust graph DP model that is more robust to such leakage. In particular, for each user $u$, the DP protection remains as long as at most $t_u$ of their neighbors are compromised, where $t_u$ is some predefined number. This is formalized below.

\begin{definition} (Robust Trust Graph DP) \label{def:robust-graph-dp}
Let $G = (V, E)$ and $\bt = (t_v)_{v \in V} \in \Z_{\geq 0}^V$. A protocol $P$ satisfies \emph{$(\epsilon,\delta, G, \bt)$-Robust Trust Graph DP ($(\epsilon, \delta, G, \bt)$-RTGDP)} if for each vertex $v \in V$ and every set $T \subseteq N(v)$ of size at most $t_v$, $\VIEW^{V \setminus (N[v] \setminus T)}_P(\bx)$ satisfies $(\eps,\delta)$-DP with respect to the input $x_v$ for all values of $\bx_{-v}$. I.e., for all pairs $x_v, x'_v \in \mathcal{X}$, all values of $\bx_{-v}$, and all subsets $S \subseteq \cO$,
\[\Pr[\VIEW^{V \setminus (N[v] \setminus T)}_P(x_v, \bx_{-v}) \in S]\leq e^\eps \Pr[\VIEW^{V \setminus (N[v] \setminus T)}_P(x'_v, \bx_{-v}) \in S] + \delta.\]
\end{definition}

\subsection{Integer Aggregation Protocol}

We start by giving an integer aggregation protocol that is again based on an LP. We  adapt the LP in \eqref{eq:origlp} by imposing a stricter constraint to ensure DP guarantees even when up to $t_v$ of $v$'s neighbor are compromised. 
This results in the following LP where the only difference compared to \eqref{eq:origlp} is the stricter first constraint.\footnote{$\binom{S}{\leq t}$ denotes the collection of all subsets of $S$ of size $\leq t$.} 
Note that when $\bt = 0$, the two LPs coincide. 
\begin{equation} \label{eq:lp-robust}
\min\;\; \sum_{u \in V} y_u \qquad \text{s.t.}\;\; \sum_{u \in (N[v] \setminus T)} y_u \geq 1 \quad\forall v \in V, \;T \in \binom{N(v)}{\leq t_v}; \quad 0 \leq y_u \leq 1 \quad \forall u \in V. 
\end{equation}
While \eqref{eq:lp-robust} can have exponential size due to the presence of \emph{``$\forall T \in \binom{N(v)}{\leq t_v}$''} in the constraint, it can still be solved in polynomial time because an efficient separation oracle exists for the first constraint.
Specifically, for each $v \in V$, we can check the first constraint by letting $T$ be the $t_v$ maximum values and check the inequality for just that $T$ (instead of enumerating over all $T \in \binom{N(v)}{\leq t_v}$).
%
%
From the above LP, we can derive an algorithm that uses exactly the same protocol as in \Cref{thm:lp-protocol}.

\begin{theorem}\label{thm:lp-protocol-robust}
There is an $(\eps, G, \bt)$-RTGDP protocol for the aggregation problem with MSE at most $2\Delta^2\cdot\optlp^{\bt}/\eps^2$, where  $\optlp^{\bt}$ denotes the value of the optimal solution to the LP in \eqref{eq:lp-robust}.
\end{theorem}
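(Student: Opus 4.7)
The plan is to use the same protocol as in Theorem~\ref{thm:lp-protocol}, but with $\by = (y_u)_{u \in V}$ taken to be an optimal solution to the strengthened LP \eqref{eq:lp-robust}. The utility analysis is then literally identical to that in Theorem~\ref{thm:lp-protocol}: the MSE is upper bounded by $\sum_{u \in V} \E[z_u^2] = \sum_{u \in V} 2 y_u / (\eps/\Delta)^2 = 2\Delta^2 \optlp^{\bt}/\eps^2$, since changing the LP only changes what $(y_u)_{u \in V}$ is, not the protocol or the variance calculation. All the work is therefore in redoing the privacy analysis to accommodate the enlarged adversary, which now controls $V \setminus (N[v] \setminus T) = (V \setminus N[v]) \cup T$ for some arbitrary $T \subseteq N(v)$ of size at most $t_v$.

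Fix $v$, $T$, and $\bx_{-v}$. The idea, paralleling Theorem~\ref{thm:lp-protocol}, is to identify a small sufficient statistic through which the view depends on $x_v$, and then post-process. The corrupted neighbors in $T$ now directly see the shares $(s^u_v(x))_{u \in T}$ they received, while the shares $(s^u_v(x))_{u \in N[v] \setminus T}$ sent to honest neighbors are hidden and only appear in the broadcast sums $a_u = z_u + \sum_{v' \in N[u]} s^u_{v'}$ for $u \in N[v] \setminus T$. As in the original proof, everything else in the view -- inputs and shares not involving $v$, broadcasts $a_u$ for $u \in V \setminus N[v]$ which do not see $s^{\cdot}_v$, and the fresh randomness $z_u$ for corrupted $u$ -- is independent of $x_v$. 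So the view is a post-processing of the pair
\[
\bigl((z_u + s^u_v(x))_{u \in N[v] \setminus T},\; (s^u_v(x))_{u \in T}\bigr).
\]

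The main (and really the only) new step is to show this pair is itself a post-processing of $x_v + \sum_{u \in N[v] \setminus T} z_u$ together with randomness independent of $x_v$. By the standard property of uniform additive sharing, the marginal distribution of $(s^u_v(x))_{u \in T}$ is uniform on $\Z_q^T$ irrespective of $x_v$, and conditional on $(s^u_v(x))_{u \in T} = \mathbf{r}$, the remaining shares $(s^u_v(x))_{u \in N[v] \setminus T}$ are uniform in $\Z_q^{N[v]\setminus T}$ subject to summing to $x_v - \sum_{u \in T} r_u$. Hence one can simulate the pair above by (i) drawing $\mathbf{r}$ uniformly in $\Z_q^T$, (ii) drawing $(z_u)_{u \in N[v] \setminus T}$ from their $\sNB$ distributions, and (iii) sampling a uniform vector in $\Z_q^{N[v]\setminus T}$ whose coordinates sum to $x_v + \sum_{u \in N[v]\setminus T} z_u - \sum_{u \in T} r_u$. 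Steps (i) and (ii) use randomness independent of $x_v$, and step (iii) depends on $x_v$ only through $x_v + \sum_{u \in N[v]\setminus T} z_u$, establishing the claimed post-processing.

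Applying Lemma~\ref{lem:postp} twice gives
\[
D_{\infty}\!\bigl(\VIEW^{V \setminus (N[v]\setminus T)}_P(x_v) \,\big\|\, \VIEW^{V \setminus (N[v]\setminus T)}_P(x'_v)\bigr)
\le D_{\infty}\!\Bigl(x_v + \sum_{u \in N[v]\setminus T} z_u \,\Big\|\, x'_v + \sum_{u \in N[v]\setminus T} z_u\Bigr).
\]
By the additivity of $\sNB$, $\sum_{u \in N[v]\setminus T} z_u \sim \sNB\!\bigl(\sum_{u \in N[v]\setminus T} y_u,\, 1 - e^{-\eps/\Delta}\bigr)$, and the strengthened first constraint of \eqref{eq:lp-robust} guarantees $\sum_{u \in N[v]\setminus T} y_u \ge 1$. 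Applying Lemma~\ref{lem:snb-dp} bounds the right-hand side by $\eps$, yielding $(\eps, G, \bt)$-RTGDP. The main conceptual obstacle -- and the place where the stricter LP constraint is used -- is precisely this reduction to a sum of $\sNB$ noise supported only on the uncompromised neighbors $N[v]\setminus T$; everything else is a routine reprise of the proof of Theorem~\ref{thm:lp-protocol}.
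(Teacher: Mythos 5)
Your proposal is correct and follows essentially the same route as the paper: reuse the protocol of Theorem~\ref{thm:lp-protocol} with an optimal solution of \eqref{eq:lp-robust}, reduce the enlarged view to the sufficient statistic $\bigl((z_u+s^u_v)_{u\in N[v]\setminus T},(s^u_v)_{u\in T}\bigr)$, show this is a post-processing of $x_v+\sum_{u\in N[v]\setminus T} z_u$ via the conditional uniformity of additive shares, and invoke Lemma~\ref{lem:snb-dp} using the strengthened constraint $\sum_{u\in N[v]\setminus T} y_u\ge 1$. Your explicit simulation argument for why the pair depends on $x_v$ only through $x_v+\sum_{u\in N[v]\setminus T} z_u$ is in fact slightly more careful than the paper's terser phrasing, but it is the same argument.
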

%


\subsection{Lower Bound}

We define a $(2, \bt)$-robust packing of $G = (V, E)$ as a pair $U \subseteq V$ and $(T_u)_{u \in U}$ such that (i) $N[u] \setminus T_u$ are disjoint for all $u \in U$ and (ii) $T_u \in \binom{N(u)}{\leq t_u}$ for all $u \in U$. The size of the robust packing is $|U|$. Let $\optind^{\bt}$ denote the largest size of a $(2, \bt)$-robust packing of $G$.

Note that, when $\bt = \bzero$, $(2, \bt)$-robust packing coincides with the standard notion of packing we used in the previous section. We prove a lower bound for integer aggregation in the $(\eps, G, \bt)$-RTGDP model that grows with the size of the maximum $(2, \bt)$-robust packing:

\begin{theorem} \label{thm:robust-lb-generic}
For any $\eps \leq O(1)$ and $\bt \in \bN^{V}$, any $(\eps, G, \bt)$-RTGDP protocol for integer aggregation must incur MSE at least $\Omega(\Delta^2 \cdot \optind^{\bt})$.\footnote{Again, this theorem is shown via a reduction to the LDP model; see~\Cref{app:proofs} for the proof.
} 
\end{theorem}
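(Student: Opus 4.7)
}
The plan is to mimic the proof of Lemma~\ref{lem:red-lb}, replacing the ordinary packing with a $(2,\bt)$-robust packing so that the RTGDP guarantee (rather than plain TGDP) can be invoked. Concretely, suppose $P$ is an $(\eps, G, \bt)$-RTGDP protocol for integer aggregation, and let $(U, (T_u)_{u \in U})$ be a maximum $(2,\bt)$-robust packing of $G$, with $U = \{u_1, \dots, u_m\}$ and $m = \optind^{\bt}$. Since the sets $N[u_1] \setminus T_{u_1}, \dots, N[u_m] \setminus T_{u_m}$ are pairwise disjoint, we can fix an arbitrary partition $V = Q_1 \cup \cdots \cup Q_m$ with $N[u_i] \setminus T_{u_i} \subseteq Q_i$ for all $i \in [m]$.

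Given this partition, I would define an $\eps$-LDP protocol $\tP$ on $m$ users exactly as in Lemma~\ref{lem:red-lb}: user $i \in [m]$ with input $\tx_i \in \{0,\dots,\Delta\}$ simulates every $P$-user in $Q_i$, feeding input $\tx_i$ into $P$-user $u_i$ and $0$ into every other $P$-user in $Q_i$; the final estimate output by $\tP$ is the one produced by $P$. Since $\tP$ only resamples the internal randomness of $P$ and reports $P$'s estimate of $\sum_v x_v = \sum_i \tx_i$, its MSE equals that of $P$.

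For the LDP guarantee, fix any $i \in [m]$ and any $\tbx_{-i}$. The joint view of the other LDP users is $\VIEW^{[m] \setminus \{i\}}_{\tP}(\tbx) = \VIEW^{V \setminus Q_i}_P(\bx(\tbx))$, where $\bx(\tbx)$ denotes the induced $P$-input. By the containment $N[u_i] \setminus T_{u_i} \subseteq Q_i$, we have $V \setminus Q_i \subseteq V \setminus (N[u_i] \setminus T_{u_i})$, so $\VIEW^{V \setminus Q_i}_P(\bx(\tbx))$ is a post-processing of $\VIEW^{V \setminus (N[u_i] \setminus T_{u_i})}_P(\bx(\tbx))$. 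Since $T_{u_i} \in \binom{N(u_i)}{\leq t_{u_i}}$, the RTGDP property of $P$ (applied at vertex $u_i$ with compromised set $T_{u_i}$) gives $\eps$-DP with respect to $x_{u_i} = \tx_i$, and Lemma~\ref{lem:postp} transfers this to the LDP view. Thus $\tP$ is an $\eps$-LDP protocol for integer aggregation on $m$ users with the same MSE as $P$.

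Finally, invoking the known $\Omega(\Delta^2 m)$ lower bound on the MSE of any $\eps$-LDP protocol for integer aggregation~\cite{ChanSS12} (rescaling the $\Delta = 1$ bound by $\Delta^2$) gives MSE $\Omega(\Delta^2 \cdot \optind^{\bt})$ for $P$, as required. The only nontrivial step is verifying that the partition $Q_1, \dots, Q_m$ exists and is compatible with the RTGDP definition: this is exactly what the $(2,\bt)$-robust packing condition was designed to ensure, so no additional obstacle arises beyond what was already handled in Lemma~\ref{lem:red-lb}.
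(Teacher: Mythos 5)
Your proposal matches the paper's proof essentially step for step: the paper proves Theorem~\ref{thm:robust-lb-generic} via Lemma~\ref{lem:robust-lb}, which performs exactly this reduction — partitioning $V$ into blocks $Q_i \supseteq N[u_i] \setminus T_{u_i}$ from a maximum $(2,\bt)$-robust packing, simulating $P$ with the nonzero input placed at $u_i$, invoking the RTGDP guarantee at vertex $u_i$ with compromised set $T_{u_i}$ together with post-processing, and then applying the $\Omega(\Delta^2 m)$ LDP lower bound of~\cite{ChanSS12}. The argument is correct and no further comparison is needed.
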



%

\subsection{Bi-criteria Tightness of the Bounds}

Although we are not aware in general how large the gap between our upper (\Cref{thm:lp-protocol-robust}) and lower bounds (\Cref{thm:robust-lb-generic}) are, we can show the following \emph{bi-criteria} result, that the upper bound is not much larger than the lower bound when we increase $\bt$ slightly. This is stated and proved below. 

We write $\lceil \alpha \cdot \bt \rceil$ for some $\alpha > 0$ as a shorthand for the vector $(\lceil \alpha \cdot t_u\rceil)_{u \in U}$. Furthermore, let $\bdeg_U$ denote the vector of degrees of the vertices, i.e., $(\deg(u))_{u \in U}$.

\begin{theorem} \label{thm:robust-bicriteria-gap}
For any $\alpha \in (0, 1)$, $\optind^{\bt + \lceil \alpha \cdot \bdeg_U\rceil} \geq \frac{\alpha}{8} \cdot \optlp^{\bt}$, where $\optind^{\bt}, \optlp^{\bt}$ are as defined in \Cref{thm:lp-protocol-robust,thm:robust-lb-generic}.
\end{theorem}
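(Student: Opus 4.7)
The plan is to dualize the LP in~\eqref{eq:lp-robust} and round the resulting fractional robust packing into an integral one. The constraints $y_u \le 1$ are redundant at the optimum of~\eqref{eq:lp-robust}, so the dual is the fractional $(2,\bt)$-robust packing LP
\begin{equation*}
\max\ \sum_{v,T} z_{v,T} \quad\text{s.t.}\quad \sum_{(v,T):\, u \in N[v] \setminus T} z_{v,T} \le 1\ \ \forall u \in V;\ \ z \ge 0,
\end{equation*}
whose value equals $\optlp^{\bt}$ by strong duality. We let $z$ be an optimal dual solution and write $p_v := \sum_T z_{v,T}$ and $q_{v,u} := \sum_{T:\, u \in N[v] \setminus T} z_{v,T}$. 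The constraint at $u = v$ (using $v \in N[v] \setminus T$ always) yields $p_v \le 1$, and $\sum_{u \in N(v)} q_{v,u} \le \deg(v)\, p_v$ is immediate since $|N(v) \setminus T| \le \deg(v)$.

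Setting $\beta := \alpha/4$, we then sample a candidate set $C$ by including each $v$ independently with probability $\beta p_v \le 1$ and, conditional on $v \in C$, drawing $T_v$ with probability $z_{v,T}/p_v$. To repair overlaps, for each $u$ let $S_u := \{v \in C : u \in N[v] \setminus T_v\}$; we pick $v^\star(u) := u$ whenever $u \in C$ (which is consistent, since $v \in S_v \iff v \in C$), and an arbitrary element of $S_u$ otherwise. For every $v \in S_u \setminus \{v^\star(u)\}$ one has $v \ne u$, so $u \in N(v)$ is a legal augmentation and we set $T_v^{\mathrm{new}} := T_v \cup \{u\}$. This keeps $T_v^{\mathrm{new}} \subseteq N(v)$ and makes the sets $\{N[v] \setminus T_v^{\mathrm{new}}\}_{v \in C}$ pairwise disjoint.

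The core estimate is on $\mathrm{cost}(v) := |T_v^{\mathrm{new}} \setminus T_v|$ conditional on $v \in C$. A $u \in N(v)$ contributes only if $u \notin T_v$ (conditional probability $q_{v,u}/p_v$) and some $v' \in C \setminus \{v\}$ has $u \in N[v'] \setminus T_{v'}$ (probability at most $\beta \sum_{v' \ne v} q_{v',u} \le \beta$ by the dual constraint at $u$; the sub-case $u \in C$ is absorbed via $v' = u$). Summing over $u \in N(v)$ gives $\E[\mathrm{cost}(v) \mid v \in C] \le \beta \deg(v)$, so Markov yields $\Pr[\mathrm{cost}(v) \le 2\beta \deg(v) \mid v \in C] \ge 1/2$. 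Since $2\beta \deg(v) = (\alpha/2)\deg(v) \le \lceil \alpha \deg(v)\rceil$, restricting $C$ to such $v$ produces a valid $(2, \bt + \lceil \alpha \bdeg_U \rceil)$-robust packing of expected size at least $(\beta/2)\sum_v p_v = (\alpha/8)\,\optlp^{\bt}$, so the probabilistic method concludes.

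The main obstacle is the case $u = v$ in the conflict resolution: since $T_v \subseteq N(v)$, one cannot absorb $u = v$ by growing $T_v$, so a symmetric rule could be forced to \emph{drop} $v$ and lose a constant fraction of the expected mass. Declaring $v^\star(u) := u$ whenever $u \in C$ sidesteps this: the only conflicts $v$ must pay for come from $u \in N(v)$, all of which are legal augmentations. With that design choice the cost bound follows cleanly from dual feasibility at $u$, and one Markov step yields the $\alpha/8$ factor.
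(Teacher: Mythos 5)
Your proposal is correct and takes essentially the same route as the paper: dualize the LP in \eqref{eq:lp-robust} to get a fractional robust packing (the paper's LP \eqref{eq:dual-lp-robust}), apply randomized rounding with selection probability proportional to $\alpha/4$, repair conflicts by enlarging the sets $T_v$, discard vertices whose repair cost exceeds the budget via a Markov argument, and conclude with the same $\alpha/8$ factor. The only substantive difference is your conflict-resolution rule $v^\star(u)=u$ whenever $u\in C$, which absorbs the $u=v$ conflicts that the paper instead handles by a separate filtering condition \eqref{eq:vertex-not-included} costing an extra $1/4$ failure probability; your variant is sound (and in fact slightly wasteful elsewhere, since Markov at threshold $2\beta\deg(v)$ gives back the factor the paper spends on that condition), so both land on the same constant.
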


To show this, we consider the dual of LP in \eqref{eq:lp-robust}, which turns out to be a relaxation for $(2, \bt)$-robust packing where there is a variable $w_{v, T} \in [0, 1]$ for all $v \in V, T \subseteq \binom{N(v)}{\leq t_v}$ representing whether $(v, T)$ should be included in the robust packing. To turn such a fractional solution to an integral one, we employ \emph{randomized rounding}---a standard technique in approximation algorithms (e.g., \cite[Chapter 5]{WS11}). More precisely, we include $(v, T)$ in our solution with probability proportional to $w_{v, T}$. Unfortunately, this does not work yet as the produced solution may not be a $(2, \bt)$-robust packing, i.e., it might contain $w_{v, T}$ and $w_{v', T'}$ such that $(N[v] \setminus T)$ and $(N[v'] \setminus T')$ are not disjoint. Due to this, we need to apply a correction procedure on top of this randomized solution. Roughly speaking, we try to enlarge $T$ until we are sure that such an intersection is avoided. This is indeed the reason why we need the slight increase in $\bt$. However, even with this increase, we still have to be careful as sometimes $T$ might become too large, i.e., larger than $t_v + \lceil\alpha \cdot \deg(v)\rceil$. We deal with this by simply removing such a pair $(v, T)$ from the solution. A careful analysis shows that (in expectation) only a small fraction of the solution will get removed this way; see~\Cref{app:proofs}.
\section{Machine Learning with Trust Graph DP}\label{app:ML}

\newcommand{\round}{\mathrm{round}}
\newcommand{\cN}{\mathcal{N}}

While the main body of our work focuses on integer aggregation, it is a primitive on which we can build many more complex algorithms. First of all, we can easily use it to perform \emph{real number aggregation} by re-scaling and discretization~\cite{BBGN20}. In particular, suppose that each user now has $x_i \in [0, 1]$. They can pick $\Delta \in \bN$ and perform integer aggregation on $y_i = \round(\Delta x_i)$ where $\round(x)$ randomly round $x$ to either $\lfloor x \rfloor$ or $1 + \lfloor x \rfloor$ with probabilities $1 - (x - \lfloor x \rfloor)$ and $x - \lfloor x \rfloor$ respectively. Once we have run the integer aggregation protocol, the answer is scaled by a factor of $\frac{1}{\Delta}$. If our integer aggregation protocol has MSE $\Delta^2 \xi^2$, then this results in a real aggregation protocol with error $\xi^2 + \frac{n}{4\Delta^2}$~\cite{BBGN20}. Picking $\Delta$ to be sufficiently large (e.g., $\omega(\sqrt{n})$), the second term becomes negligible.

Real number aggregation allows us to perform statistical queries (SQ)~\cite{Kearns98}. While a simple family, statistical queries have wide variety of applications in learning theory. One specific work we wish to highlight is that of \citep{FeldmanGV17}, who showed that convex optimization problems can be solved using statistical queries. As a result, we can apply their algorithm to our setting and obtain convex optimization algorithms with Trust Graph DP. We also remark that statistical queries are also useful for (non-ML) data analytic tasks; e.g.,~\citep{FeldmanGV17} provides SQ-based (vector) mean estimation, and other statistics such as quantiles are also known to be computable via SQs~\cite{Feldman17}.

\subsection{Vector Summation with Trust Graph DP}

Although SQ-based algorithms can be used in our model, we will sketch a more direct algorithm for the task of \emph{vector summation} with Trust Graph DP. This is not only more efficient but also provide better error guarantees for subsequent tasks such as convex empirical risk minimization (ERM).

In the \emph{vector summation} (with $\ell_2$-norm bound) problem, each user input $x_i$ is a vector in $\R^d$ such that $\|x_i\|_2 \leq \Delta$, where $\Delta$ is a norm bound known to the algorithm. The goal is again to compute an estimate $\ta$ to the sum $a = \sum_{i \in [n]} x_i$. The $\ell_2^2$-error is defined as $\E[\|\ta - a\|_2^2]$. Furthermore, we say that the estimator is unbiased if $\E[\ta] = a$.

It will be easiest to state the algorithms in terms of \emph{zero-concentrated differential privacy (zCDP)}~\cite{BunS16,DworkR16}. To do so, we first define $\alpha$-Renyi divergence for $\alpha > 1$ between distributions $\cP, \cP'$ for two distributions $\cP, \cP'$ to be $D_{\alpha}(\cP ~\|~ \cP') :=  \frac{1}{\alpha - 1} \ln\left(\E_{x \sim \cP}\left[\left(\frac{\cP(x)}{\cP'(x)}\right)^{\alpha - 1}\right]\right)$. We note that $\lim_{\alpha \to \infty} D_\alpha(\cP ~\|~ \cP')$ is indeed equal to $D_{\infty}(\cP ~\|~ \cP')$ that we defined in \Cref{sec:prelim}.

zCDP can now be defined as follows.

\begin{definition}
(zCDP;~\citep{BunS16})
A randomized mechanism $M: \mathcal{X}^n \to \cO$ is \emph{$\rho$-zero concentrated DP ($\rho$-zCDP)} if for all pairs $\mathbf{x}, \mathbf{x}' \in \mathcal{X}^n$ of datasets that differ only in the data of a single user and all $\alpha > 1$,
\begin{equation}
    D_\alpha\left(M(\mathbf{x}) ~\|~ M(\mathbf{x}')\right)\leq \alpha\rho.
\end{equation}
\end{definition}

ZCDP can be easily converted to DP:
\begin{lemma}[\citep{BunS16}] \label{lem:zcdp-to-dp}
For any $\rho > 0, \delta \in (0, 1/2)$, any $\rho$-zCDP algorithm is $(\rho + 2\sqrt{\rho \cdot \ln(1/\delta)}, \delta)$-DP.
\end{lemma}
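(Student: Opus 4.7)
The plan is to prove the standard conversion from $\rho$-zero-concentrated DP to approximate $(\eps, \delta)$-DP via a tail bound on the privacy loss random variable. First, fix any two datasets $\mathbf{x}, \mathbf{x}' \in \mathcal{X}^n$ differing in a single user's data, and define the privacy loss $Z(o) := \ln\bigl(\Pr[M(\mathbf{x}) = o] / \Pr[M(\mathbf{x}') = o]\bigr)$ for $o \sim M(\mathbf{x})$. Using the moment generating function identity $D_\alpha(M(\mathbf{x}) \,\|\, M(\mathbf{x}')) = \tfrac{1}{\alpha - 1} \ln \E_{o \sim M(\mathbf{x})}\bigl[e^{(\alpha-1)Z(o)}\bigr]$, the $\rho$-zCDP hypothesis translates into the clean moment bound $\E\bigl[e^{(\alpha-1)Z}\bigr] \leq e^{(\alpha-1)\alpha\rho}$ for every $\alpha > 1$.

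Next, I apply Markov's inequality to the nonnegative random variable $e^{(\alpha-1)Z}$ to obtain $\Pr[Z > \eps] \leq e^{(\alpha-1)\alpha\rho - (\alpha-1)\eps}$ for any $\eps > 0$. To make the right-hand side at most $\delta$ it suffices that $\eps \geq \alpha\rho + \ln(1/\delta)/(\alpha-1)$. Substituting $u = \alpha - 1 > 0$ and minimizing $u\rho + \ln(1/\delta)/u$ by elementary calculus gives the optimum $u^* = \sqrt{\ln(1/\delta)/\rho}$; plugging back produces exactly $\eps = \rho + 2\sqrt{\rho \ln(1/\delta)}$, matching the lemma's bound. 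The assumption $\delta < 1/2$ ensures $\ln(1/\delta) > 0$, so $u^* > 0$ and the optimal $\alpha$ lies strictly above $1$.

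Finally, I convert this tail bound on $Z$ into approximate DP in the standard way: for any measurable event $S$, decompose $\Pr[M(\mathbf{x}) \in S] \leq \Pr[M(\mathbf{x}) \in S \wedge Z \leq \eps] + \Pr[Z > \eps]$. The first term is at most $e^\eps \Pr[M(\mathbf{x}') \in S]$ directly from the definition of $Z$ on the event $\{Z \leq \eps\}$, while the second term is at most $\delta$ by the previous step. Combining the two bounds yields $(\eps, \delta)$-DP for the chosen neighbors, and since the pair was arbitrary this completes the proof.

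The main technical obstacle is the choice of $\alpha$: one must balance the linear term $\alpha\rho$ against the $\ln(1/\delta)/(\alpha-1)$ term while keeping $\alpha > 1$, and the argument crucially exploits that zCDP supplies the Renyi bound at \emph{every} $\alpha > 1$ rather than a single fixed one. Apart from this optimization step, the argument is a routine combination of Markov's inequality with a two-case decomposition over the privacy loss event, mirroring the classical conversion from Renyi DP to $(\eps,\delta)$-DP.
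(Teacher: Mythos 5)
The paper does not prove this lemma itself; it is imported verbatim from Bun and Steinke (cited as \citep{BunS16}), and your argument is precisely the standard proof given there: bound the moment generating function of the privacy loss via the R\'enyi divergence, apply Markov's inequality, optimize over $\alpha$ to get $\eps = \rho + 2\sqrt{\rho\ln(1/\delta)}$, and finish with the usual two-case decomposition over the event $\{Z \le \eps\}$. Your derivation is correct, including the key observation that zCDP supplies the bound at every $\alpha>1$ so the optimizing $\alpha^* = 1 + \sqrt{\ln(1/\delta)/\rho}$ is admissible; the only point left implicit is that finiteness of $D_\alpha$ forces absolute continuity so the privacy loss $Z$ is well defined, which is a routine remark.
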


It also has a simple composition theorem:
\begin{lemma}[\citep{BunS16}] \label{lem:cdp-comp}
Let $M$ be a mechanism that just runs subroutines that are $\rho_1$-zCDP, \dots, $\rho_m$-zCDP. Then, $M$ is $(\rho_1 + \cdots + \rho_m)$-zCDP. 
\end{lemma}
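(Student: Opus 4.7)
The plan is to derive the lemma from the chain rule for R\'enyi divergence, applied one subroutine at a time. Fix adjacent datasets $\mathbf{x}, \mathbf{x}' \in \cX^n$ and any order $\alpha > 1$, and let $(Y_1, \dots, Y_m)$ and $(Y'_1, \dots, Y'_m)$ denote the joint outputs of the $m$ subroutines when $M$ is run on $\mathbf{x}$ and $\mathbf{x}'$ respectively, where each subroutine may depend adaptively on previous outputs. The goal is to show that $D_\alpha\bigl((Y_1, \dots, Y_m) ~\|~ (Y'_1, \dots, Y'_m)\bigr) \le \alpha \sum_{i=1}^m \rho_i$, since this holds for all $\alpha > 1$ iff $M$ is $\bigl(\sum_i \rho_i\bigr)$-zCDP.

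The central ingredient I would invoke (or prove from scratch via H\"older's inequality applied to the log-expectation definition of R\'enyi divergence) is the weak chain rule:
\[
D_\alpha(P_{AB} ~\|~ Q_{AB}) \;\le\; D_\alpha(P_A ~\|~ Q_A) \;+\; \sup_a D_\alpha\bigl(P_{B\mid A=a} ~\|~ Q_{B\mid A=a}\bigr).
\]
Iterating this inequality $m-1$ times along the sequence $(Y_1, \dots, Y_m)$ gives
\[
D_\alpha\bigl((Y_1, \dots, Y_m) ~\|~ (Y'_1, \dots, Y'_m)\bigr) \;\le\; \sum_{i=1}^m \sup_{y_{<i}} D_\alpha\bigl(Y_i \mid Y_{<i} = y_{<i} ~\|~ Y'_i \mid Y'_{<i} = y_{<i}\bigr).
\]

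For each fixed conditioning $y_{<i}$, the distributions $Y_i \mid Y_{<i} = y_{<i}$ and $Y'_i \mid Y'_{<i} = y_{<i}$ are exactly the output distributions of the $i$-th subroutine on $\mathbf{x}$ versus $\mathbf{x}'$, with $y_{<i}$ playing the role of auxiliary input that does not depend on the sensitive user. Because this subroutine is $\rho_i$-zCDP by hypothesis (and zCDP guarantees are preserved under fixing auxiliary input independent of the differing coordinate), each summand is bounded by $\alpha \rho_i$, and adding these up yields the claimed bound.

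The main obstacle is being careful in the adaptive setting: one has to justify that the zCDP guarantee for each $M_i$ holds uniformly over all prior transcripts $y_{<i}$, i.e., that $M_i$ should be modeled as a family of mechanisms indexed by its earlier inputs, each of which is $\rho_i$-zCDP on its own. Once this is set up, the chain rule turns the adaptive composition into a clean sum. The rest of the argument is purely algebraic manipulation of R\'enyi divergences and does not require any further properties of the mechanisms.
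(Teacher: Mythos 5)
Your proof is correct: the weak chain rule for R\'enyi divergence (which follows directly from factoring the joint density and pulling the supremum out of the inner sum, no H\"older needed) combined with the per-round $\alpha\rho_i$ bounds is exactly the standard argument for adaptive zCDP composition. The paper itself gives no proof and simply imports this lemma from Bun and Steinke, whose proof proceeds along the same lines as yours, so there is nothing to reconcile.
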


Trust Graph zCDP is simply zCDP on the view of the non-neighbors:
\begin{definition}[Trust Graph zCDP]
Let $G = (V, E)$.  A protocol $P$ satisfies \emph{$(\rho, G)$-Trust Graph zCDP ($(\rho, G)$-TGzCDP)} if for each vertex $v \in V$, $\VIEW^{V \setminus N[v]}_P(\bx)$ satisfies $\rho$-zCDP with respect to the input $x_v$ for all values of $\bx_{-v}$. I.e., for all pairs $x_v, x'_v \in \mathcal{X}$, all values of $\bx_{-v}$ and all $\alpha > 1$,
$$D_{\alpha}\left(\VIEW^{V \setminus N[v]}_P(x_v, \bx_{-v}) ~\middle\|~ \VIEW^{V \setminus N[v]}_P(x'_v, \bx_{-v})\right) \leq \alpha \rho.$$
\end{definition}

We can now state the algorithm for vector summation, which is similar to the algorithm in \Cref{thm:domset-protocol}.

\begin{theorem} \label{thm:domset-protocol-vecsum}
There is an $(\rho, G)$-TGzCDP mechanism for the aggregation problem with $\ell_2^2$-error $2 d \Delta^2 |T| / \rho$, where $T$ is any dominating set of $G$. Furthermore, the estimate is unbiased.
\end{theorem}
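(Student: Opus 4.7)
}
The plan is to mimic the dominating-set protocol from Theorem~\ref{thm:domset-protocol} verbatim, but (i) replace the scalar inputs with the $d$-dimensional vectors $x_v \in \R^d$, and (ii) replace the discrete Laplace noise with spherical Gaussian noise, which is the natural noise distribution for the $\ell_2$-geometry and for zCDP. Concretely, each user $v \in V$ picks an arbitrary $u_v \in T \cap N[v]$ (non-empty since $T$ is dominating) and sends $x_v$ to $u_v$; each $u \in T$ samples $z_u \sim \cN(\bzero, \sigma^2 I_d)$ and broadcasts $a_u = z_u + \sum_{v : u_v = u} x_v$; and the final estimate is $\ta = \sum_{u \in T} a_u$. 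I will set $\sigma^2 := 2\Delta^2/\rho$, for reasons that will become clear in the privacy analysis.

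For the privacy analysis, fix $v \in V$ and $\bx_{-v}$. Exactly as in the proof of Theorem~\ref{thm:domset-protocol}, let $S_v = \{w \in N[v] \mid u_w \in N[v]\}$ and observe that $\VIEW^{V \setminus N[v]}_P(x) = (\bx_{-S_v}, \ba(x))$; the only component that depends on $x_v = x$ is $a_{u_v}(x) = z_{u_v} + x + \sum_{v' \neq v, u_{v'} = u_v} x_{v'}$, so the entire view is a post-processing of $z_{u_v} + x$. Since $\|x_v - x'_v\|_2 \leq 2\Delta$ for any $x_v, x'_v$ with $\|\cdot\|_2 \leq \Delta$, the Gaussian mechanism guarantees
\begin{equation*}
D_{\alpha}(z_{u_v} + x_v ~\|~ z_{u_v} + x'_v) \leq \alpha \cdot \frac{\|x_v - x'_v\|_2^2}{2\sigma^2} \leq \alpha \cdot \frac{4\Delta^2}{2 \sigma^2} = \alpha \rho,
\end{equation*}
for every $\alpha > 1$. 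Composing with the post-processing property of Renyi divergence (the $\alpha$-version of Lemma~\ref{lem:postp}, which is standard) yields $(\rho, G)$-TGzCDP.

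For the utility analysis, $\ta - a = \sum_{u \in T} z_u$ is a sum of independent zero-mean Gaussians, so the estimator is unbiased (giving the ``furthermore'' clause), and
\begin{equation*}
\E[\|\ta - a\|_2^2] = \sum_{u \in T} \E[\|z_u\|_2^2] = |T| \cdot d \sigma^2 = \frac{2 d \Delta^2 |T|}{\rho}.
\end{equation*}
There is no real obstacle here: the entire argument is a routine vectorization of Theorem~\ref{thm:domset-protocol}, and the only subtle point is bookkeeping the factor of $2$ coming from the symmetric $\ell_2$-sensitivity $\|x_v - x'_v\|_2 \leq 2\Delta$, which is exactly what forces $\sigma^2 = 2\Delta^2/\rho$ and produces the constant in the claimed error bound.
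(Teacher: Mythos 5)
Your proof is correct and follows essentially the same route as the paper's: the same dominating-set protocol with spherical Gaussian noise, the same post-processing reduction of the non-neighbors' view to $z_{u_v}+x_v$, and the same Gaussian-mechanism zCDP bound. If anything, your bookkeeping of the replacement sensitivity $\|x_v - x'_v\|_2 \le 2\Delta$ and the resulting choice $\sigma^2 = 2\Delta^2/\rho$ is what actually matches the stated error $2d\Delta^2|T|/\rho$; the paper's written choice $\sigma = \Delta\sqrt{1/(2\rho)}$ is inconsistent with its own final expression and appears to be a typo.
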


\begin{proof}
Let $\sigma = \Delta\sqrt{\frac{1}{2\rho}}$. The protocol works as follows:
\begin{itemize}
\item First, each user $v \in V$ picks an arbitrary vertex $u_v \in T \cap N[v]$.
\item Each user $u \in T$ broadcasts the sum of all vectors it receives together with a noise drawn from $\cN(0, \sigma^2 I_d)$. 
More formally, the user broadcasts $a_u = \sum_{v \in V \atop u_v = u} x_v + z_u$, where $z_u \sim \cN(0, \sigma^2 I_d)$.
\item Finally, the estimate is $\ta = \sum_{u \in T} a_u$.
\end{itemize}

\paragraph*{Privacy Analysis.}
Consider any $v \in V$ and $\bx_{-v} \in \cX^{V \setminus \{v\}}$. 
Similar to the proof of \Cref{thm:domset-protocol}, the view is a post-processing of $z_u + x$. Thus, for any $x_v, x'_v \in \cX$, we have
\begin{align*}
D_{\alpha}\left(\VIEW^{V \setminus N[v]}_P(x_v)~\middle\|~ \VIEW^{V \setminus N[v]}_P(x'_v)\right)
\leq D_{\alpha}(z_{u_v} + x_v ~\|~ z_{u_v} + x'_v) \leq \alpha \cdot \rho,
\end{align*}
where the second inequality follows from the zCDP guarantee of the Gaussian mechanism (e.g., Proposition 16 of \citep{BunS16}).
Thus, the protocol satisfies $(\rho, G)$-TGzCDP as desired.

\paragraph*{Utility Analysis.}
It is clear that the estimate is unbiased.
The $\ell_2^2$-error is
\begin{align*}
\E\left[\left(\ta - a\right)^2\right] 
= \sum_{u \in T} \E[\|z_u\|^2]
= |T| \cdot (\sigma^2 d) = 2d\Delta^2|T|/\rho. & \qedhere
\end{align*}
\end{proof}

Using \Cref{lem:zcdp-to-dp}, we immediately get the following corollary. For $|T| = \Theta(1)$, this guarantee (asymptotically) matches the known lower bound in central DP (see, e.g., \cite{BassilyST14}), while for $|T| = \Theta(n)$ , this guarantee nearly matches the known lower bound in local DP \cite{AsiFT22}.
\begin{corollary} \label{cor:domset-protocol-vecsum}
For any $\eps < O(\log(1/\delta))$, there is an $(\eps, \delta, G)$-TGDP mechanism for the aggregation problem with $\ell_2^2$-error $O\left(\frac{d \Delta^2 |T| \log(1/\delta)}{\eps
^2}\right)$, where $T$ is any dominating set of $G$. Furthermore, the estimate is unbiased.
\end{corollary}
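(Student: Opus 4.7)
The plan is to derive Corollary~\ref{cor:domset-protocol-vecsum} as a direct consequence of Theorem~\ref{thm:domset-protocol-vecsum} combined with the zCDP-to-DP conversion in Lemma~\ref{lem:zcdp-to-dp}. The mechanism is the same as in Theorem~\ref{thm:domset-protocol-vecsum}; only the choice of the zCDP parameter $\rho$ needs to be tuned so that the resulting guarantee is $(\eps, \delta)$-TGDP.

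Concretely, I would set $\rho := c \cdot \eps^2 / \log(1/\delta)$ for a sufficiently small absolute constant $c > 0$. Then Lemma~\ref{lem:zcdp-to-dp} (applied vertex-by-vertex to the view of $V \setminus N[v]$, since TGzCDP is defined per vertex exactly as TGDP is) yields
\begin{equation*}
\rho + 2\sqrt{\rho \cdot \ln(1/\delta)} \;\leq\; c \cdot \frac{\eps^2}{\log(1/\delta)} + 2\sqrt{c}\,\eps \;\leq\; \eps,
\end{equation*}
where the last inequality uses the hypothesis $\eps < O(\log(1/\delta))$ (so that the first term is dominated by $\eps$) together with a small enough $c$. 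Hence the protocol from Theorem~\ref{thm:domset-protocol-vecsum} is $(\eps, \delta, G)$-TGDP.

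For utility, Theorem~\ref{thm:domset-protocol-vecsum} gives $\ell_2^2$-error $2 d \Delta^2 |T| / \rho$, and substituting our choice of $\rho$ gives $O\!\left(\tfrac{d \Delta^2 |T| \log(1/\delta)}{\eps^2}\right)$, as desired. Unbiasedness is inherited verbatim from Theorem~\ref{thm:domset-protocol-vecsum}, since the estimator is unchanged and only the variance parameter $\sigma$ of the Gaussian noise (which is zero-mean) has been rescaled.

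There is essentially no main obstacle here: the only mild care needed is to check that the conversion lemma applies to TGzCDP, which is immediate because Definition of TGzCDP imposes a per-vertex zCDP condition on $\VIEW^{V \setminus N[v]}_P$, and Lemma~\ref{lem:zcdp-to-dp} applied to that per-vertex view yields exactly the per-vertex $(\eps, \delta)$-DP condition in Definition~\ref{def:graph-dp}. The regime restriction $\eps < O(\log(1/\delta))$ is precisely what allows the $\sqrt{\rho \ln(1/\delta)}$ term to dominate and give the stated $\log(1/\delta)/\eps^2$ scaling.
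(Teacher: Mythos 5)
Your proposal is correct and is exactly the paper's route: the paper derives this corollary by applying Lemma~\ref{lem:zcdp-to-dp} to the TGzCDP mechanism of Theorem~\ref{thm:domset-protocol-vecsum}, and your choice $\rho = c\,\eps^2/\log(1/\delta)$ with the check $\rho + 2\sqrt{\rho\ln(1/\delta)} \leq \eps$ under the assumption $\eps < O(\log(1/\delta))$ fills in precisely the calculation the paper leaves implicit. The per-vertex application of the conversion and the inherited unbiasedness are handled correctly.
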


\subsection{From Vector Summation to Convex Optimizaiton}

Using the above vector summation protocol, we can immediately implement several DP ML algorithms in the literature, such as the DP-SGD algorithm~\cite{AbadiCGMMT016}. We can also obtain formal guarantee for convex optimization problems from these algorithms. For instance, let us consider the convex ERM problem~\cite{BassilyST14}. Here there is a loss function $\ell: \cW \times \cX \to \R$ which is $L$-Lipschitz on the first parameter and suppose that the diameter of $\cW$ is at most $R$. The goal is to minimize the empirical loss $\cL(w, \bx) := \frac{1}{n} \sum_{v \in V} \ell(w, x_v)$. Using the above vector summation algorithm, we can arrive at the following:

\begin{theorem}
For $0 < \rho \leq O(1)$, there is an $(\rho, G)$-TGzCDP mechanism for the convex ERM problem with expected excess risk $O\left(\frac{RL \sqrt{d |T| / \rho}}{n}\right)$, where $T$ is any dominating set of $G$.
\end{theorem}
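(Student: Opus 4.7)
The plan is to run noisy projected gradient descent on the empirical loss $\cL(w, \bx) = \frac{1}{n}\sum_v \ell(w, x_v)$, using the vector summation protocol of Theorem~\ref{thm:domset-protocol-vecsum} as the per-round gradient aggregator. Concretely, I would run $K$ iterations of $w_{k+1} = \Pi_{\cW}(w_k - \eta \, \ta_k/n)$ starting from an arbitrary $w_0 \in \cW$. At round $k$, each user $v$ computes the local gradient $g^{(k)}_v = \nabla \ell(w_k, x_v)$, which has $\ell_2$-norm at most $L$ by Lipschitzness, and all users jointly invoke the TGzCDP vector summation protocol with norm bound $\Delta = L$ and privacy budget $\rho/K$ to publicly broadcast the noisy sum $\ta_k$. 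Since $\ta_k$ is public, each user can locally recompute $w_{k+1}$ and proceed to the next round. The final output is the averaged iterate $\bar{w} = \frac{1}{K}\sum_{k=0}^{K-1} w_k$.

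For privacy, each round's summation is $(\rho/K, G)$-TGzCDP by Theorem~\ref{thm:domset-protocol-vecsum}. Fixing any vertex $v$, its overall non-neighbor view is the concatenation of the $K$ per-round non-neighbor views; the public iterates $w_k$ are deterministic post-processings of earlier broadcasts and fresh public randomness, so they introduce no additional leakage beyond these views. Each round is $(\rho/K)$-zCDP in $x_v$ conditionally on prior public state, so adaptive zCDP composition (Lemma~\ref{lem:cdp-comp}) yields $(\rho, G)$-TGzCDP overall. For utility, Theorem~\ref{thm:domset-protocol-vecsum} gives $\ta_k = \sum_v g^{(k)}_v + \xi_k$ with $\E[\xi_k] = 0$ and $\E\|\xi_k\|^2 = O(d L^2 |T| K/\rho)$, so $\ta_k/n$ is an unbiased estimate of the true averaged gradient with per-step squared error $V := O(d L^2 |T| K/(\rho n^2))$. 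Standard projected SGD analysis for convex $L$-Lipschitz losses on a domain of diameter $R$, with unbiased stochastic gradients of second moment $\le L^2 + V$, yields expected excess risk $O\!\bigl(R\sqrt{L^2 + V}/\sqrt{K}\bigr)$ after averaging iterates and tuning $\eta$. Choosing $K$ sufficiently large (e.g., $K = n^2$) makes the $L^2/K$ term negligible and collapses the bound to $O\!\bigl(R L \sqrt{d |T| / \rho}/n\bigr)$, as desired.

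The main obstacle I anticipate is purely bookkeeping: making sure the adaptive composition is applied correctly to the per-vertex non-neighbor view, given that broadcasts from previous rounds determine the inputs (via $w_k$) to later rounds. Since the Lipschitz constant $L$ bounds $\|\nabla \ell(w_k, \cdot)\|_2$ uniformly over $w_k \in \cW$, the sensitivity used to invoke Theorem~\ref{thm:domset-protocol-vecsum} at every round is valid regardless of the history, and the dependence of $w_k$ on earlier non-neighbor views is exactly what adaptive zCDP composition is designed to handle. Everything else — the Gaussian noise bound per round and the SGD convergence calculation — is standard and plugs in directly.
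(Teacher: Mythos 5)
Your proposal matches the paper's proof sketch essentially exactly: the paper also runs $n^2$ iterations of (stochastic) mirror descent, invokes the vector-summation protocol of Theorem~\ref{thm:domset-protocol-vecsum} with per-step budget $\rho/n^2$ and norm bound $L$, composes via Lemma~\ref{lem:cdp-comp}, and concludes by the standard convergence analysis. Your use of Euclidean projected gradient descent with averaged iterates is just the Euclidean instantiation of mirror descent, so the argument is the same.
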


\begin{proof}[Proof Sketch]
The algorithm uses the (stochastic) mirror descent (see, e.g.,  \citep[Section 6.1]{Bubeck15}) over $n^2$ steps. In each step, the gradient is computed by running our $(\rho', G)$-TGzCDP vector summation protocol with $\rho' = \rho / n^2, \Delta = G$ and scale the answer by a factor of $\frac{1}{n}$. The composition theorem (\Cref{lem:cdp-comp}) implies that the algorithm is $\rho$-TGzCDP. The excess risk guarantee follows from standard analysis of stochastic mirror descent (e.g., \citep[Theorem 6.1]{Bubeck15}).
\end{proof}

Again, using \Cref{lem:zcdp-to-dp}, we immediately get the following corollary for $(\eps, \delta)$-DP. For $|T| = \Theta(1)$, this guarantee (asymptotically) matches the known lower bound in central DP \cite{BassilyST14}.
\begin{corollary}
For any $\eps < O(\log(1/\delta))$, there is an $(\eps, \delta, G)$-TGDP mechanism for the convex ERM problem with expected excess risk $O\left(\frac{RL \sqrt{d |T| \log(1/\delta)}}{\eps n}\right)$, where $T$ is any dominating set of $G$.
\end{corollary}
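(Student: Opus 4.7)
The plan is to derive the corollary as a direct application of the zCDP-to-DP conversion (Lemma~\ref{lem:zcdp-to-dp}) to the preceding theorem. First, I would invoke the $(\rho, G)$-TGzCDP convex ERM mechanism guaranteed by that theorem, which achieves expected excess risk $O(RL\sqrt{d|T|/\rho}/n)$ for any $\rho \in (0, O(1)]$. Since the TGzCDP condition is defined pointwise at each vertex (as a zCDP guarantee on the view of the non-neighbors with respect to $x_v$), Lemma~\ref{lem:zcdp-to-dp} can be applied separately at each vertex, yielding an $(\eps', \delta, G)$-TGDP mechanism with $\eps' = \rho + 2\sqrt{\rho \cdot \ln(1/\delta)}$.

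Next, I would choose $\rho$ to translate the zCDP parameter into the desired $(\eps, \delta)$-DP parameter. Specifically, I would set $\rho = c \cdot \eps^2 / \log(1/\delta)$ for a sufficiently small absolute constant $c > 0$. Under the assumption $\eps \leq O(\log(1/\delta))$, this choice yields $\rho \leq O(\eps)$ and $2\sqrt{\rho \log(1/\delta)} = 2\sqrt{c}\,\eps$, so that $\rho + 2\sqrt{\rho \log(1/\delta)} \leq \eps$ for small enough $c$. Moreover, $\rho = O(1)$ holds (by shrinking $c$ if necessary), so this value is admissible in the preceding theorem.

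Finally, I would substitute this choice of $\rho$ into the excess risk bound. The excess risk becomes
\begin{equation*}
O\!\left(\frac{RL\sqrt{d|T|/\rho}}{n}\right) = O\!\left(\frac{RL\sqrt{d|T|\log(1/\delta)/\eps^2}}{n}\right) = O\!\left(\frac{RL\sqrt{d|T|\log(1/\delta)}}{\eps n}\right),
\end{equation*}
which is exactly the stated bound. There is no real obstacle here: both the privacy conversion and the substitution are routine, and the only mild care needed is verifying that the chosen $\rho$ simultaneously satisfies $\rho \leq O(1)$ (so the preceding theorem applies) and $\rho + 2\sqrt{\rho \log(1/\delta)} \leq \eps$ (so the resulting mechanism is $(\eps,\delta)$-DP), which is precisely where the hypothesis $\eps < O(\log(1/\delta))$ is used.
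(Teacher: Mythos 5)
Your proposal is correct and matches the paper's approach exactly: the paper derives this corollary as an immediate consequence of Lemma~\ref{lem:zcdp-to-dp} applied to the preceding TGzCDP convex ERM theorem, and your choice $\rho = c\,\eps^2/\log(1/\delta)$ with the substitution into the excess-risk bound is precisely the routine calculation the paper leaves implicit. The only detail worth noting is that the hypothesis $\eps < O(\log(1/\delta))$ is what keeps $\rho \le O(1)$ and makes the conversion give $(\eps,\delta)$-DP, which you correctly identify.
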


\section{Experiments}\label{app:exp}

We analyze the theoretically proven error bounds on real datasets. We focus  on: \textit{(i)} the \textit{improvements} over the error that come from applying the notion of TGDP compared to LDP, and \textit{(ii)} the \textit{trade-off} between robustness and efficiency when applying RTGDP. Experiments on real networks show that the LP protocol (\Cref{thm:lp-protocol}) for achieving TGDP yields a significant reduction in the error bound compared to LDP.  Furthermore, the protocol for achieving RTGDP (\Cref{thm:lp-protocol-robust}) still has markedly lower error than LDP even when a large number of neighbors could be compromised.  

\subsection{Datasets and implementation.} We analyzed nine network datasets available from the Stanford Large Network Dataset Collection~\citep{snapnets}: three email communication, two Bitcoin trust network, and four social network datasets. 
Individual details about each dataset are given in~\Cref{app:exp:data}.
Our analysis of the MSE of the proposed protocols focuses on computing $\OPT_{\LP}$ and $\OPT^{\mathbf{t}}_{\LP}$ for each graph. 
These were computed using \texttt{cvxpy} \citep{diamond2016cvxpy,agrawal2018rewriting}, and code will be made publicly available on publication.

\subsection{Error of TGDP vs. LDP.}
We compare the upper bound on the MSE using the mechanism described in~\Cref{thm:lp-protocol}  to that of standard LDP mechanisms~(\Cref{def:int-ldp}).  Specifically, the ($\varepsilon, G$)-TGDP mechanism proposed in Theorem \ref{thm:lp-protocol} achieves an MSE at most $2\Delta^2 \cdot \OPT_{\LP}/\varepsilon^2$ whereas LDP with a local version of the Laplace mechanism\footnote{There are other LDP mechanisms with slightly different error. We discuss this more in Appendix~\ref{app:local-mech-agg}.} achieves an error of $2 \Delta^2 n / \varepsilon^2$.  We call the ratio of these quantities (i.e., $\frac{\OPT_{\LP}}{n}$) as the \textit{TGDP error ratio} for a given graph. 

Table \ref{tab:tgdp_bounds} 
shows the TGDP error ratio for all datasets. On the evaluated graph, the TGDP error ratio ranged between $0.002$ and $0.207$, showing that the TGDP protocol appreciably reduces MSE.  In other words, accounting for trust between users can significantly reduce error.

\subsection{Gap between upper and lower error bounds for TGDP.} We also present a lower bound on the error of a TGDP mechanism for the datasets in Table \ref{tab:tgdp_bounds} by reporting the size of a maximal packing, which lower bounds the term $\optind$ from Theorem \ref{thm:tgdplb}. A maximal packing is a set $U \subseteq V$ such that if any other vertex $v \in V \setminus U$ were added, the result would no longer be a packing. The gap between upper and lower bound is fairly small for all datasets.


\begin{table}[!ht]
    \centering
    \begin{tabular}{lrrrr}
\toprule
Dataset & $n$ & $\OPT_{\LP}$ & $\frac{\OPT_{\LP}}{n}$ & Maximal \\
& & & & packing \\
\midrule
EU Emails (Core)  & $1,005$ & $111.97$ & $0.111$ & $103$\\
Bitcoin (Alpha) & $3,783$ & $686$ & $0.181$ & $480$\\
Facebook & $4,039$ & $10$ & $0.002$ & $10$\\
Bitcoin (OTC) & $5,881$ & $1126$ & $0.191$ & $691$ \\
Enron Emails   & $36,692$ & $3,060.66$ & $0.083$ & $2,784$ \\
GitHub & $37,700$ & $4,538$ & $0.120$ & $3,910$ \\
Epinions & $75,879$ & $15,734$ & $0.207$ & $13,832$ \\
Twitter & $81,306$ & $961$ & $0.012$ & $843$ \\
EU Emails (All) & $265,214$ & $18,074.40$ & $0.068$ & $18,007$ \\
\bottomrule
\end{tabular}
    \caption{Upper and lower bounds on error of a TGDP protocol for 9 network datasets. $\OPT_{\LP}/n$ is the error ratio, which is exactly the ratio between the error of the proposed TGDP protocol and the local Laplace mechanism. Across all datasets, this ratio indicates significant gains in efficiency for TGDP protocols. The last column is the size of a maximal packing, which lower bounds the term $\optind$ from Theorem \ref{thm:tgdplb}. The gap between upper and lower bound is fairly small for all datasets.}
    \label{tab:tgdp_bounds}
\end{table}

\subsection{Error of RTGDP.}
The RTGDP definition is more conservative than TGDP, especially as the number of untrusted neighbors per node increases. 
To evaluate the trade-off between robustness and error, we compute $\OPT^{\mathbf{t}}_{\LP}$ for varying degrees of mistrust $\mathbf{t}$ for four graphs of size $n < 10,000$. These include one email communication graph, two Bitcoin trust graphs, and one social network. For each graph, we vary the degree of mistrust $\mathbf{t}$ as follows:
for each $\alpha$, we construct $\bt_{\alpha} = \lceil \alpha \cdot \bdeg_V \rceil = (\lceil \alpha \cdot \deg(v) \rceil)_{v \in V} $.
Thus, $\alpha$ represents the proportion of neighbors that are untrusted for each node. We compute $\OPT^{\mathbf{t}_{\alpha}}_{\LP}$ for various $\alpha \in [0,1]$.
As before, we study the \textit{$\mathbf{t}_{\alpha}$-RTGDP error ratio}, $\frac{\OPT^{\mathbf{t}_{\alpha}}_{\LP}}{n}$.

Figure \ref{fig:rtgdp} shows the $\mathbf{t}_{\alpha}$-RTGDP error ratios for varying levels of mistrust $\alpha$. The initial jump in the error ratio at $\alpha=0.1$ comes from the ceiling rounding, which affects all nodes with a single neighbor. Notably, for most datasets, an error ratio of less than $0.6$ can still be achieved for $\alpha$ up to $0.5$. This suggests that the RTGDP notion may be a  practical way to interpolate between LDP and TGDP, as considerable error reductions can still occur for RTGDP for $\alpha < 0.5$.

\begin{figure}[!ht]
    \centering
    \begin{tblr}{p{0.8cm}cc}
        & $\mathbf{t}_{\alpha}$-RTGDP error ratios for varying levels of mistrust $\alpha$ \\
        \SetCell[r=2]{}$\frac{\OPT^{\mathbf{t}_{\alpha}}_{\LP}}{n}$ & \includegraphics[width=0.5\columnwidth]{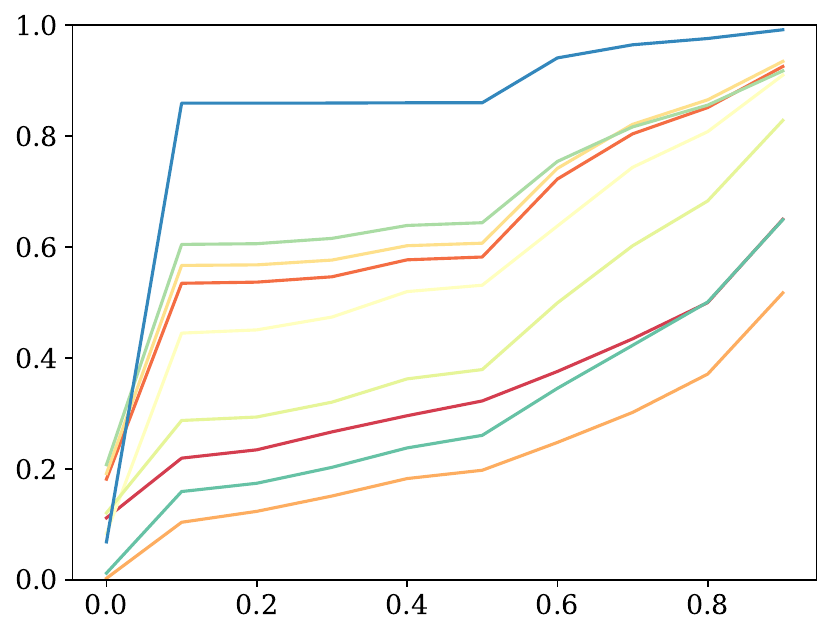} & \includegraphics[width=0.3\columnwidth]{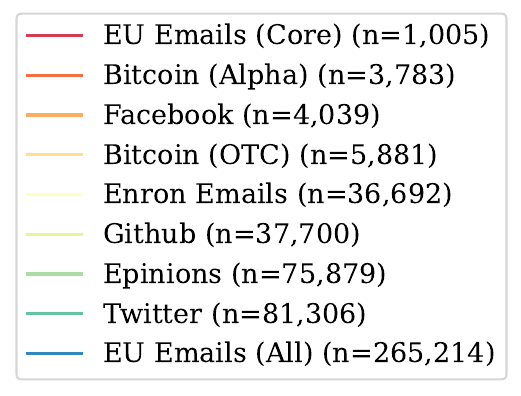} \\
        & Proportion of untrusted neighbors $\alpha$ \\
    \end{tblr}
    \caption{$\mathbf{t}_{\alpha}$-RTGDP error ratios for varying levels of mistrust $\alpha$. Even for relatively high levels of mistrust $\alpha$ up to $0.5$, RTGDP can yield a substantial improvement in error relative to LDP for most datasets. Note that for $\alpha = 1$, $\mathbf{t}_{\alpha}$-RTGDP is equivalent to LDP.}
    \label{fig:rtgdp}
\end{figure}

\section{Conclusion and Future Directions}\label{sec:conc_fut_dir}

We have proposed a new model of privacy given a graph of trust relationships between users. Our model generalizes central and local DP, and further captures intermediate trust structures such as social networks or multiple curators. 
A significant open theoretical problem is to close the gap between the upper and lower bounds for TGDP, though our experiments suggest that this gap may be small in practice. 


\bibliographystyle{alpha}
\bibliography{main_arxiv}

\newcommand{\etalchar}[1]{$^{#1}$}
\begin{thebibliography}{GKM{\etalchar{+}}21}

\bibitem[Abo18]{abowd2018us}
John~M Abowd.
\newblock The {US Census Bureau} adopts differential privacy.
\newblock In {\em KDD}, pages 2867--2867, 2018.

\bibitem[ACG{\etalchar{+}}16]{AbadiCGMMT016}
Mart{\'{\i}}n Abadi, Andy Chu, Ian~J. Goodfellow, H.~Brendan McMahan, Ilya
  Mironov, Kunal Talwar, and Li~Zhang.
\newblock Deep learning with differential privacy.
\newblock In {\em CCS}, pages 308--318, 2016.

\bibitem[AFT22]{AsiFT22}
Hilal Asi, Vitaly Feldman, and Kunal Talwar.
\newblock Optimal algorithms for mean estimation under local differential
  privacy.
\newblock In {\em ICML}, pages 1046--1056, 2022.

\bibitem[AVDB18]{agrawal2018rewriting}
Akshay Agrawal, Robin Verschueren, Steven Diamond, and Stephen Boyd.
\newblock A rewriting system for convex optimization problems.
\newblock {\em Journal of Control and Decision}, 5(1):42--60, 2018.

\bibitem[BBG{\etalchar{+}}20]{bell2020secure}
James~Henry Bell, Kallista~A Bonawitz, Adri{\`a} Gasc{\'o}n, Tancr{\`e}de
  Lepoint, and Mariana Raykova.
\newblock Secure single-server aggregation with (poly) logarithmic overhead.
\newblock In {\em CCS}, pages 1253--1269, 2020.

\bibitem[BBGN19]{balle2019privacy}
Borja Balle, James Bell, Adri{\`a} Gasc{\'o}n, and Kobbi Nissim.
\newblock The privacy blanket of the shuffle model.
\newblock In {\em CRYPTO}, pages 638--667, 2019.

\bibitem[BBGN20]{BBGN20}
Borja Balle, James Bell, Adri{\`{a}} Gasc{\'{o}}n, and Kobbi Nissim.
\newblock Private summation in the multi-message shuffle model.
\newblock In {\em CCS}, pages 657--676, 2020.

\bibitem[BEM{\etalchar{+}}17]{bittau17}
Andrea Bittau, {\'{U}}lfar Erlingsson, Petros Maniatis, Ilya Mironov, Ananth
  Raghunathan, David Lie, Mitch Rudominer, Ushasree Kode, Julien Tinn{\'{e}}s,
  and Bernhard Seefeld.
\newblock Prochlo: Strong privacy for analytics in the crowd.
\newblock In {\em SOSP}, pages 441--459, 2017.

\bibitem[BHvV09]{Burger}
Alewyn~P. Burger, Michael~A. Henning, and Jan~H. van Vuuren.
\newblock On the ratios between packing and domination parameters of a graph.
\newblock {\em Discrete Mathematics}, 309:2473--2478, 2009.

\bibitem[BIK{\etalchar{+}}17]{bonawitz2017practical}
Keith Bonawitz, Vladimir Ivanov, Ben Kreuter, Antonio Marcedone, H~Brendan
  McMahan, Sarvar Patel, Daniel Ramage, Aaron Segal, and Karn Seth.
\newblock Practical secure aggregation for privacy-preserving machine learning.
\newblock In {\em CCS}, pages 1175--1191, 2017.

\bibitem[BNO08]{BeimelNO08}
Amos Beimel, Kobbi Nissim, and Eran Omri.
\newblock Distributed private data analysis: Simultaneously solving how and
  what.
\newblock In {\em CRYPTO}, pages 451--468, 2008.

\bibitem[BS16]{BunS16}
Mark Bun and Thomas Steinke.
\newblock Concentrated differential privacy: Simplifications, extensions, and
  lower bounds.
\newblock In {\em TCC}, pages 635--658, 2016.

\bibitem[BST14]{BassilyST14}
Raef Bassily, Adam~D. Smith, and Abhradeep Thakurta.
\newblock Private empirical risk minimization: Efficient algorithms and tight
  error bounds.
\newblock In {\em FOCS}, pages 464--473, 2014.

\bibitem[Bub15]{Bubeck15}
S{\'{e}}bastien Bubeck.
\newblock Convex optimization: Algorithms and complexity.
\newblock {\em Found. Trends Mach. Learn.}, 8(3-4):231--357, 2015.

\bibitem[CB22]{cyffers2022privacy}
Edwige Cyffers and Aur{\'e}lien Bellet.
\newblock Privacy amplification by decentralization.
\newblock In {\em AISTATS}, pages 5334--5353, 2022.

\bibitem[CIY20]{cho2020contact}
Hyunghoon Cho, Daphne Ippolito, and Yun~William Yu.
\newblock Contact tracing mobile apps for {COVID-19}: Privacy considerations
  and related trade-offs.
\newblock {\em arXiv preprint arXiv:2003.11511}, 2020.

\bibitem[CSS12]{ChanSS12}
T.{-}H.~Hubert Chan, Elaine Shi, and Dawn Song.
\newblock Optimal lower bound for differentially private multi-party
  aggregation.
\newblock In {\em ESA}, pages 277--288, 2012.

\bibitem[CSU{\etalchar{+}}19]{CheuSUZZ19}
Albert Cheu, Adam~D. Smith, Jonathan Ullman, David Zeber, and Maxim Zhilyaev.
\newblock Distributed differential privacy via shuffling.
\newblock In {\em EUROCRYPT}, pages 375--403, 2019.

\bibitem[CY23]{cheu2023necessary}
Albert Cheu and Chao Yan.
\newblock Necessary conditions in multi-server differential privacy.
\newblock In {\em ITCS}, 2023.

\bibitem[DB16]{diamond2016cvxpy}
Steven Diamond and Stephen Boyd.
\newblock {CVXPY}: {A} {P}ython-embedded modeling language for convex
  optimization.
\newblock {\em JMLR}, 17(83):1--5, 2016.

\bibitem[DKBS15]{dong2015differential}
Roy Dong, Walid Krichene, Alexandre~M Bayen, and S~Shankar Sastry.
\newblock Differential privacy of populations in routing games.
\newblock In {\em CDC}, pages 2798--2803, 2015.

\bibitem[DKM{\etalchar{+}}06]{dwork2006our}
Cynthia Dwork, Krishnaram Kenthapadi, Frank McSherry, Ilya Mironov, and Moni
  Naor.
\newblock Our data, ourselves: Privacy via distributed noise generation.
\newblock In {\em EUROCRYPT}, pages 486--503, 2006.

\bibitem[DKY17]{ding2017collecting}
Bolin Ding, Janardhan Kulkarni, and Sergey Yekhanin.
\newblock Collecting telemetry data privately.
\newblock {\em NIPS}, 30, 2017.

\bibitem[DMNS06]{dwork2006calibrating}
Cynthia Dwork, Frank McSherry, Kobbi Nissim, and Adam~D. Smith.
\newblock Calibrating noise to sensitivity in private data analysis.
\newblock In {\em TCC}, pages 265--284, 2006.

\bibitem[DR{\etalchar{+}}14]{dwork2014algorithmic}
Cynthia Dwork, Aaron Roth, et~al.
\newblock The algorithmic foundations of differential privacy.
\newblock {\em Foundations and Trends{\textregistered} in Theoretical Computer
  Science}, 9(3--4):211--407, 2014.

\bibitem[DR16]{DworkR16}
Cynthia Dwork and Guy~N. Rothblum.
\newblock Concentrated differential privacy.
\newblock {\em CoRR}, abs/1603.01887, 2016.

\bibitem[EFM{\etalchar{+}}19]{erlingsson2019amplification}
{\'U}lfar Erlingsson, Vitaly Feldman, Ilya Mironov, Ananth Raghunathan, Kunal
  Talwar, and Abhradeep Thakurta.
\newblock Amplification by shuffling: From local to central differential
  privacy via anonymity.
\newblock In {\em SODA}, pages 2468--2479, 2019.

\bibitem[EGS03]{evfimievski2003limiting}
Alexandre Evfimievski, Johannes Gehrke, and Ramakrishnan Srikant.
\newblock Limiting privacy breaches in privacy preserving data mining.
\newblock In {\em PODS}, pages 211--222, 2003.

\bibitem[Fei98]{Feige98}
Uriel Feige.
\newblock A threshold of ln \emph{n} for approximating set cover.
\newblock {\em JACM}, 45(4):634--652, 1998.

\bibitem[Fel17]{Feldman17}
Vitaly Feldman.
\newblock Dealing with range anxiety in mean estimation via statistical
  queries.
\newblock In {\em ALT}, pages 629--640, 2017.

\bibitem[FGV17]{FeldmanGV17}
Vitaly Feldman, Crist{\'{o}}bal Guzm{\'{a}}n, and Santosh~S. Vempala.
\newblock Statistical query algorithms for mean vector estimation and
  stochastic convex optimization.
\newblock In {\em SODA}, pages 1265--1277, 2017.

\bibitem[GKM{\etalchar{+}}21]{ghazi2021differentially}
Badih Ghazi, Ravi Kumar, Pasin Manurangsi, Rasmus Pagh, and Amer Sinha.
\newblock Differentially private aggregation in the shuffle model: Almost
  central accuracy in almost a single message.
\newblock In {\em ICML}, pages 3692--3701, 2021.

\bibitem[GKMP20]{ghazi2020private}
Badih Ghazi, Ravi Kumar, Pasin Manurangsi, and Rasmus Pagh.
\newblock Private counting from anonymous messages: Near-optimal accuracy with
  vanishing communication overhead.
\newblock In {\em ICML}, pages 3505--3514, 2020.

\bibitem[GMPV20]{GMPV20}
Badih Ghazi, Pasin Manurangsi, Rasmus Pagh, and Ameya Velingker.
\newblock Private aggregation from fewer anonymous messages.
\newblock In {\em EUROCRYPT}, pages 798--827, 2020.

\bibitem[GMW19]{goldreich2019play}
Oded Goldreich, Silvio Micali, and Avi Wigderson.
\newblock How to play any mental game, or a completeness theorem for protocols
  with honest majority.
\newblock In {\em Providing Sound Foundations for Cryptography: On the Work of
  Shafi Goldwasser and Silvio Micali}, pages 307--328. 2019.

\bibitem[GRS12]{GhoshRS12}
Arpita Ghosh, Tim Roughgarden, and Mukund Sundararajan.
\newblock Universally utility-maximizing privacy mechanisms.
\newblock {\em SICOMP}, 41(6):1673--1693, 2012.

\bibitem[HKT00]{HalldorssonKT00}
Magn{\'{u}}s~M. Halld{\'{o}}rsson, Jan Kratochv{\'{\i}}l, and Jan~Arne Telle.
\newblock Independent sets with domination constraints.
\newblock {\em Discret. Appl. Math.}, 99(1-3):39--54, 2000.

\bibitem[IKOS06]{IKOS06}
Yuval Ishai, Eyal Kushilevitz, Rafail Ostrovsky, and Amit Sahai.
\newblock Cryptography from anonymity.
\newblock In {\em FOCS}, pages 239--248, 2006.

\bibitem[Kea98]{Kearns98}
Michael~J. Kearns.
\newblock Efficient noise-tolerant learning from statistical queries.
\newblock {\em JACM}, 45(6):983--1006, 1998.

\bibitem[KHM{\etalchar{+}}18]{kumar2018rev2}
Srijan Kumar, Bryan Hooi, Disha Makhija, Mohit Kumar, Christos Faloutsos, and
  VS~Subrahmanian.
\newblock Rev2: Fraudulent user prediction in rating platforms.
\newblock In {\em WSDM}, pages 333--341, 2018.

\bibitem[KLN{\etalchar{+}}11]{kasiviswanathan2011can}
Shiva~Prasad Kasiviswanathan, Homin~K Lee, Kobbi Nissim, Sofya Raskhodnikova,
  and Adam Smith.
\newblock What can we learn privately?
\newblock {\em SICOMP}, 40(3):793--826, 2011.

\bibitem[KSSF16]{kumar2016edge}
Srijan Kumar, Francesca Spezzano, VS~Subrahmanian, and Christos Faloutsos.
\newblock Edge weight prediction in weighted signed networks.
\newblock In {\em ICDM}, pages 221--230, 2016.

\bibitem[KY04]{klimt2004enron}
Bryan Klimt and Yiming Yang.
\newblock The {Enron} corpus: A new dataset for email classification research.
\newblock In {\em ECML}, pages 217--226, 2004.

\bibitem[LK14]{snapnets}
Jure Leskovec and Andrej Krevl.
\newblock {SNAP Datasets}: {Stanford} large network dataset collection.
\newblock \url{http://snap.stanford.edu/data}, June 2014.

\bibitem[LKF07]{leskovec2007graph}
Jure Leskovec, Jon Kleinberg, and Christos Faloutsos.
\newblock Graph evolution: Densification and shrinking diameters.
\newblock {\em TKDD}, 1(1), 2007.

\bibitem[LM12]{leskovec2012learning}
Jure Leskovec and Julian Mcauley.
\newblock Learning to discover social circles in ego networks.
\newblock {\em NIPS}, 25, 2012.

\bibitem[Lov75]{Lovasz75}
L{\'{a}}szl{\'{o}} Lov{\'{a}}sz.
\newblock On the ratio of optimal integral and fractional covers.
\newblock {\em Discret. Math.}, 13(4):383--390, 1975.

\bibitem[PCK20]{park2020information}
Sangchul Park, Gina~Jeehyun Choi, and Haksoo Ko.
\newblock Information technology-based tracing strategy in response to
  {COVID-19} in {South Korea}---privacy controversies.
\newblock {\em JAMA}, 323(21):2129--2130, 2020.

\bibitem[RAD03]{richardson2003trust}
Matthew Richardson, Rakesh Agrawal, and Pedro Domingos.
\newblock Trust management for the semantic web.
\newblock In {\em ISWC}, pages 351--368, 2003.

\bibitem[RAS19]{rozemberczki2019multiscale}
Benedek Rozemberczki, Carl Allen, and Rik Sarkar.
\newblock Multi-scale attributed node embedding.
\newblock {\em arXiv preprint arXiv:1909.13021}, 2019.

\bibitem[RE19]{tf-privacy}
Carey Radebaugh and Ulfar Erlingsson.
\newblock {Introducing TensorFlow Privacy: Learning with Differential Privacy
  for Training Data}, March 2019.
\newblock \url{blog.tensorflow.org}.

\bibitem[Sol02]{solove2002conceptualizing}
Daniel~J Solove.
\newblock Conceptualizing privacy.
\newblock {\em Calif. L. Rev.}, 90:1087, 2002.

\bibitem[Ste20]{steinke2020multi}
Thomas Steinke.
\newblock Multi-central differential privacy.
\newblock {\em arXiv preprint arXiv:2009.05401}, 2020.

\bibitem[TM20]{pytorch-privacy}
Davide Testuggine and Ilya Mironov.
\newblock {PyTorch Differential Privacy Series Part 1: DP-SGD Algorithm
  Explained}, August 2020.
\newblock \url{medium.com}.

\bibitem[Vad17]{vadhan2017complexity}
Salil Vadhan.
\newblock The complexity of differential privacy.
\newblock {\em Tutorials on the Foundations of Cryptography: Dedicated to Oded
  Goldreich}, pages 347--450, 2017.

\bibitem[War65]{warner1965randomized}
Stanley~L Warner.
\newblock Randomized response: A survey technique for eliminating evasive
  answer bias.
\newblock {\em JASA}, 60(309):63--69, 1965.

\bibitem[WS11]{WS11}
David~P. Williamson and David~B. Shmoys.
\newblock {\em The Design of Approximation Algorithms}.
\newblock Cambridge University Press, 2011.

\bibitem[Yao82]{yao1982protocols}
Andrew C-C Yao.
\newblock Protocols for secure computations.
\newblock In {\em FOCS}, pages 160--164, 1982.

\bibitem[Yao86]{yao1986generate}
Andrew C-C Yao.
\newblock How to generate and exchange secrets.
\newblock In {\em FOCS}, pages 162--167, 1986.

\end{thebibliography}


\appendix

\section{Example Graphs}

This section gives additional examples of trust graph structures to build intuition for the TGDP privacy model and for the theoretical results.

Figure \ref{fig:network_large_gap} gives an example of a graph with a significant gap between the domination number and the packing number, which importantly illustrates the gap between the upper and lower bounds on MSE under TGDP in Theorems \ref{thm:domset-protocol}, \ref{thm:lp-protocol}, and \ref{thm:tgdplb}. 

Figure \ref{fig:network_star} illustrates a simple example of a star graph in which the TGDP model captures central DP. 

\begin{figure}[!ht]
    \centering
    \begin{tikzpicture}
    \node[shape=circle,draw=black] (00) at (0,0) {};
    \node[shape=circle,draw=black] (01) at (0,1) {};
    \node[shape=circle,draw=black] (02) at (0,2) {};
    \node[shape=circle,draw=black] (03) at (0,3) {};
    \node[shape=circle,draw=black] (10) at (1,0) {};
    \node[shape=circle,draw=black] (11) at (1,1) {};
    \node[shape=circle,draw=black] (12) at (1,2) {};
    \node[shape=circle,draw=black] (13) at (1,3) {};
    \node[shape=circle,draw=black] (20) at (2,0) {};
    \node[shape=circle,draw=black] (21) at (2,1) {};
    \node[shape=circle,draw=black] (22) at (2,2) {};
    \node[shape=circle,draw=black] (23) at (2,3) {};
    \node[shape=circle,draw=black] (30) at (3,0) {};
    \node[shape=circle,draw=black] (31) at (3,1) {};
    \node[shape=circle,draw=black] (32) at (3,2) {};
    \node[shape=circle,draw=black] (33) at (3,3) {};

    \path [-] (00) edge node[left] {} (01);
    \path [-] (00) edge[bend left] node[left] {} (02);
    \path [-] (00) edge[bend left] node[left] {} (03);
    \path [-] (01) edge node[left] {} (02);
    \path [-] (02) edge node[left] {} (03);
    \path [-] (01) edge[bend left] node[left] {} (03);
    
    \path [-] (00) edge node[left] {} (10);
    \path [-] (00) edge[bend right] node[left] {} (20);
    \path [-] (00) edge[bend right] node[left] {} (30);
    \path [-] (10) edge node[left] {} (20);
    \path [-] (20) edge node[left] {} (30);
    \path [-] (10) edge[bend right] node[left] {} (30);

    \path [-] (10) edge node[left] {} (11);
    \path [-] (10) edge[bend left] node[left] {} (12);
    \path [-] (10) edge[bend left] node[left] {} (13);
    \path [-] (11) edge node[left] {} (12);
    \path [-] (12) edge node[left] {} (13);
    \path [-] (11) edge[bend left] node[left] {} (13);

    \path [-] (01) edge node[left] {} (11);
    \path [-] (01) edge[bend right] node[left] {} (21);
    \path [-] (01) edge[bend right] node[left] {} (31);
    \path [-] (11) edge node[left] {} (21);
    \path [-] (21) edge node[left] {} (31);
    \path [-] (11) edge[bend right] node[left] {} (31);

    \path [-] (20) edge node[left] {} (21);
    \path [-] (20) edge[bend right] node[left] {} (22);
    \path [-] (20) edge[bend right] node[left] {} (23);
    \path [-] (21) edge node[left] {} (22);
    \path [-] (22) edge node[left] {} (23);
    \path [-] (21) edge[bend right] node[left] {} (23);

    \path [-] (02) edge node[left] {} (12);
    \path [-] (02) edge[bend left] node[left] {} (22);
    \path [-] (02) edge[bend left] node[left] {} (32);
    \path [-] (12) edge node[left] {} (22);
    \path [-] (22) edge node[left] {} (32);
    \path [-] (12) edge[bend left] node[left] {} (32);

    \path [-] (30) edge node[left] {} (31);
    \path [-] (30) edge[bend right] node[left] {} (32);
    \path [-] (30) edge[bend right] node[left] {} (33);
    \path [-] (31) edge node[left] {} (32);
    \path [-] (32) edge node[left] {} (33);
    \path [-] (31) edge[bend right] node[left] {} (33);

    \path [-] (03) edge node[left] {} (13);
    \path [-] (03) edge[bend left] node[left] {} (23);
    \path [-] (03) edge[bend left] node[left] {} (33);
    \path [-] (13) edge node[left] {} (23);
    \path [-] (23) edge node[left] {} (33);
    \path [-] (13) edge[bend left] node[left] {} (33);
\end{tikzpicture}
    \caption{A graph with a gap between the domination number (4) and the packing number (1). The relaxed LP solution $\OPT_{\LP} = 16/7 \approx 2.285$.}
    \label{fig:network_large_gap}
\end{figure}
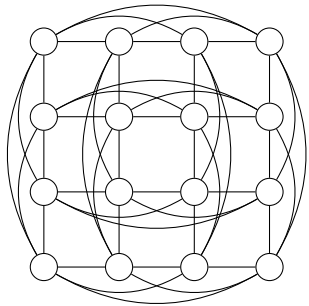

\begin{figure}[!ht]
    \centering
    \begin{tikzpicture}
    \node[shape=circle,draw=black] (00) at (0,0) {};
    \node[shape=circle,draw=black] (01) at (-0.5,1) {};
    \node[shape=circle,draw=black] (02) at (0,2) {};
    \node[shape=circle,draw=black] (10) at (1,-0.5) {};
    \node[shape=circle,draw=black] (11) at (1,1) {A};
    \node[shape=circle,draw=black] (12) at (1,2.5) {};
    \node[shape=circle,draw=black] (20) at (2,0) {};
    \node[shape=circle,draw=black] (21) at (2.5,1) {};
    \node[shape=circle,draw=black] (22) at (2,2) {};

    \path [-] (00) edge node[left] {} (11);
    \path [-] (01) edge node[left] {} (11);
    \path [-] (02) edge node[left] {} (11);
    \path [-] (10) edge node[left] {} (11);
    \path [-] (12) edge node[left] {} (11);
    \path [-] (20) edge node[left] {} (11);
    \path [-] (21) edge node[left] {} (11);
    \path [-] (22) edge node[left] {} (11);
    
\end{tikzpicture}
    \caption{A star graph representing a central DP setting in which all users trust user A as a central analyst. TGDP exactly matches central DP here in that TGDP would allow for user A to handle all data, while no individual user's data can be identified by communication across non-neighbors. The minimum dominating set size is simply 1.}
    \label{fig:network_star}
\end{figure}
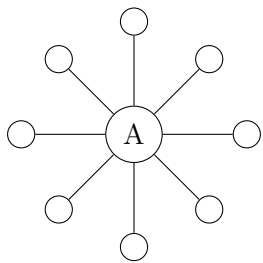

\section{Motivating Examples for Trust Graph DP}\label{app:motivating_examples}
The DP on trust graphs model is an intermediate setting between local and central DP. It allows formalizing a notion of (differential) privacy in cases where users have different privacy levels for different subsets of the population, as well as settings where  users can place different levels of trust on different digital services or sub-platforms. While one can think of different settings where trust graphs naturally arise, we next discuss a few example scenarios:

\paragraph*{Social graphs.} In a social network setting where a user’s posts are visible to their friends but hidden from strangers, computing analytics (e.g., a histogram of users’s check-ins in different geographic locations) or training a machine learning model (e.g., a generative model on the users’ posted photos) without entrusting a central curator naturally leads to peer-to-peer protocols that are trust graph DP with the trust graph being the social graph.

\paragraph*{Consent graphs.}

A user might consent for the data they generate on some online service to be shared in the clear with only a specific set of other services. As each user might have their own consent choices for every pair of services, this induces a bipartite trust graph where on one side each vertex corresponds to a (user, service) pair, and on the other side, each vertex corresponds to a service. The presence of an edge between $(u, A)$ and $B$ then indicates that user $u$ consented for their data on service $A$ to be shared in the clear with service $B$. It might also be acceptable for data to be exchanged between a pair of services even if no consent was provied by the user as long as the exchanged data is anonymous. If DP is deemed to meet the bar for data anonymization, then DP on the bipartite trust graph described above would be a better-suited model than central DP (which would not be acceptable privacy-wise due to the lack of a single service that is allowed to see the data in the clear) or local DP (which would result in an unnecessarily poor utility).

\paragraph*{Routing games.} Differentially private routing games \citep{dong2015differential} recommend routes while requiring the origins and destinations of drivers to be considered private. In this more relaxed setting, each driver may be willing to share their origin and destination with a subset of other drivers. This willingness is demonstrated by the location sharing feature on Google maps, where users share their locations with subsets of other users. 
    
\paragraph*{Disease contact tracing.} The COVID-19 pandemic spurred significant development of contact tracing apps and algorithms, along with discussion of privacy concerns \citep{cho2020contact,park2020information}. Trust graph notions may be applicable in contact tracing settings. For example, consider a contact tracing app that alerts users if they have been in recent contact with someone who has contracted the disease. Alice may willing to share her disease status directly with a select close contact circle (such as household members), but still require that
strangers cannot discover Alice's disease status given what the mechanism outputs for them. 
    
\paragraph*{Personalized shopping recommendations.} Consider an advertisement recommender system that provides Alice with recommendations based on her browsing history. It is important that strangers are not able to deduce Alice's browsing history through the  recommendations they are provided. This is particularly important if Alice's browsing history is sensitive, e.g., contains medical information. However, Alice may be willing to share parts of her personal shopping history with trusted close contacts (e.g., family or a partner), and the household may perhaps even benefit from some controlled degree of data sharing in recommendations.

\section{Missing Proofs}
\label{app:proofs}

\begin{proof}[Proof of Lemma~\ref{lem:snb-dp}]
We may write $Z = Z^1 + Z^2$ where $Z^1 \sim \sNB\left(1, 1 - e^{-\eps/\Delta}) = \DLap(\Delta/\eps\right)$ and $Z^2 \sim \sNB\left(r - 1, 1 - e^{-\eps/\Delta}\right)$ are independent. We again can think of $x + Z$ and $x' + Z$ as a post-processing of $x + Z^1$ and $x' + Z^1$, respectively. This implies that
\begin{align*}
&D_{\infty}\left(x + Z ~\|~ x + Z\right) \\
&\leq D_{\infty}\left(x + Z_1 ~\|~ x' + Z_1\right) \\
&= \max_{o \in \bN} \ln\left(\frac{\Pr_{Z_1 \sim \DLap(\Delta/\eps)}[x + Z_1 = o]}{\Pr_{Z_1 \sim \DLap(\Delta/\eps)}[x' + Z_1 = o]}\right) \\
&= \frac{\eps}{\Delta} \cdot \left(|o - x| - |o - x'|\right) \\
& \leq \frac{\eps}{\Delta} \cdot |x' - x| 
\leq \eps,
\end{align*}
where the last two inequalities follow from the triangle inequality and since $0 \leq x, x' \leq \Delta$ respectively.
\end{proof}

\begin{proof}[Proof of~\Cref{thm:lp-protocol-robust}]
Let $\by = (y_u)_{u \in V}$ denote any solution to LP in \eqref{eq:lp-robust}.  We use exactly the same protocol as in \Cref{thm:lp-protocol}. The utility analysis is also similar to before. Thus, we only provide the privacy analysis below.

\paragraph*{Privacy Analysis.} Consider any $v \in V, T \in \binom{N(v)}{\leq t_v}$ and $\bx_{-v} \in \{0, \dots, \Delta\}^{V \setminus \{v\}}$. We write $\ba$ as a shorthand for $(a_u)_{u \in V}$.
Notice that $\VIEW^{V \setminus (N[v] \setminus T)}_P(x)$ is exactly $(\bx_{-(N[v] \setminus T)}, \ba(x), (s^u_{v'})_{u \in V \setminus (N[v] \setminus T), v' \in V})$.
We claim that this is a post-processing of $(z_u + s^u_v)_{u \in (N[v] \setminus T)} \cup (s^u_v)_{u \in T}$. This is simply because $\bx_{-(N[v] \setminus T)}$, $(a_u)_{u \in V \setminus N[v]}$, $(s^u_{v'})_{u \in V \setminus N[v], v' \in V}, (s^u_{v'})_{u \in T, v' \in V \setminus \{v\}}$ do not depend on $x_v = x$ at all and are independent of $(z_u + s^u_v)_{u \in (N[v] \setminus T)} \cup (s^u_v)_{u \in T}$; finally, note that $(a_u(x))_{u \in N[v]}$ is a post-processing of $(z_u + s^u_v)_{u \in N[v]}$ since $a_{u}(x) = \left(z_u + s^u_v\right) + \sum_{v' \in N[V] \setminus \{v\} \atop u_{v'} = u} s^u_{v'}$.

Now, since $(s^u_v(x))_{u \in N[v]}$ are random elements of $\Z_q$ that sums to $x$, we also have that $(z_u + s^u_v)_{u \in (N[v] \setminus T)} \cup (s^u_v)_{u \in T}$ are random elements of $\Z_q$ that sums to $x + \sum_{u \in N[v]} z_u$. In other words, $(z_u + s^u_v)_{u \in (N[v] \setminus T)} \cup (s^u_v)_{u \in T}$ is a post-processing of $x + \sum_{u \in (N[v] \setminus T)} z_u$.
The remainder of the privacy proof then proceed similarly to that in \Cref{thm:lp-protocol}, where we now use the fact that $\sum_{u \in (N[v] \setminus T)} z_u$ is distributed as $\sNB\left(\sum_{u \in (N[v] \setminus T)} y_u, 1 - e^{-\eps/\Delta}\right)$ and $\sum_{u \in (N[v] \setminus T)} y_u \geq 1$ from \eqref{eq:lp-robust}.
\end{proof}

\Cref{thm:robust-lb-generic} is an immediate consequence of the following reduction to the LDP model, similar to \Cref{thm:tgdplb} and the corresponding reduction (\Cref{lem:red-lb}).
\begin{lemma}
\label{lem:robust-lb}
Suppose that there is an $(\eps, G, \bt)$-RTGDP protocol for integer aggregation. Then, there exists an $\eps$-local DP protocol for integer aggregation with the same MSE for $\optind^{\bt}$ users, where $\optind^{\bt}$ denotes the size of the largest $(2, \bt)$-robust packing  $G$.
\end{lemma}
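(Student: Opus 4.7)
The plan is to mimic the proof of Lemma~\ref{lem:red-lb}, but using a $(2, \bt)$-robust packing in place of a standard packing, and carefully exploiting the sets $T_{u_i}$ in the robust packing to invoke the RTGDP guarantee at each packing center.

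Let $U = \{u_1, \dots, u_m\} \subseteq V$ together with $(T_{u_1}, \dots, T_{u_m})$ be a maximum $(2, \bt)$-robust packing, so $m = \optind^{\bt}$, the sets $N[u_i] \setminus T_{u_i}$ are pairwise disjoint, and $T_{u_i} \in \binom{N(u_i)}{\leq t_{u_i}}$. By the disjointness in condition (i), we can choose a partition $Q_1 \cup \cdots \cup Q_m = V$ such that $N[u_i] \setminus T_{u_i} \subseteq Q_i$ for every $i \in [m]$ (assign each $v \in V$ that is not already forced into some $Q_i$ arbitrarily). Given any $(\eps, G, \bt)$-RTGDP protocol $P$ for integer aggregation, construct the LDP protocol $\tP$ on $m$ users exactly as in the proof of \Cref{lem:red-lb}: user $i$ of $\tP$ simulates every $P$-user in $Q_i$, using the input $x_{u_i} = \tx_i$ and $x_u = 0$ for all $u \in Q_i \setminus \{u_i\}$; $\tP$ then outputs the estimate produced by $P$.

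Since $\tP$ runs $P$ on an input that has the same total sum as $(\tx_1, \dots, \tx_m)$ (because of the $0$-padding), the MSE of $\tP$ equals the MSE of $P$. For the LDP guarantee, fix $i \in [m]$ and $\tbx_{-i}$. The view of all users other than $i$ in $\tP$ is exactly $\VIEW^{V \setminus Q_i}_P(\bx(\tbx))$, where $\bx(\tbx)$ is the simulated input. The key observation is the chain of inclusions
\[
V \setminus Q_i \;\subseteq\; V \setminus (N[u_i] \setminus T_{u_i}),
\]
which holds by our choice of partition. Hence $\VIEW^{V \setminus Q_i}_P(\bx(\tbx))$ is a post-processing of $\VIEW^{V \setminus (N[u_i] \setminus T_{u_i})}_P(\bx(\tbx))$.

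Because $T_{u_i} \in \binom{N(u_i)}{\leq t_{u_i}}$, the $(\eps, G, \bt)$-RTGDP guarantee of $P$ applied to vertex $u_i$ with compromised set $T_{u_i}$ yields $\eps$-DP for $\VIEW^{V \setminus (N[u_i] \setminus T_{u_i})}_P$ with respect to $x_{u_i} = \tx_i$. Combining with \Cref{lem:postp} gives
\[
D_{\infty}\!\left(\VIEW^{[m] \setminus \{i\}}_{\tP}(\tx_i) \,\middle\|\, \VIEW^{[m] \setminus \{i\}}_{\tP}(\tx'_i)\right) \leq D_{\infty}\!\left(\VIEW^{V \setminus (N[u_i] \setminus T_{u_i})}_P(\tx_i) \,\middle\|\, \VIEW^{V \setminus (N[u_i] \setminus T_{u_i})}_P(\tx'_i)\right) \leq \eps,
\]
so $\tP$ is $\eps$-LDP. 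The only delicate point compared to \Cref{lem:red-lb} is that the partition must put each compromised set $T_{u_i}$ \emph{outside} of $Q_i$'s ``protected'' core $N[u_i] \setminus T_{u_i}$; the definition of $(2, \bt)$-robust packing is exactly crafted to permit this, so no real obstacle arises and the proof follows the same template as the non-robust case. Applying the LDP lower bound of $\Omega(\Delta^2 m)$ from~\citep{ChanSS12} then yields \Cref{thm:robust-lb-generic}.
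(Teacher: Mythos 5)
Your proof is correct and follows essentially the same route as the paper's: the same partition $Q_1, \dots, Q_m$ with $N[u_i] \setminus T_{u_i} \subseteq Q_i$, the same simulation with zero-padding, and the same post-processing step via $V \setminus Q_i \subseteq V \setminus (N[u_i] \setminus T_{u_i})$ combined with the RTGDP guarantee at vertex $u_i$ with compromised set $T_{u_i}$. No meaningful differences.
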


\begin{proof}[Proof of~\Cref{lem:robust-lb}]
Let $U = \{u_1, \dots, u_m\} \subseteq V$ and $(T_v)_{v \in U}$  be the largest $(2, \bt)$-robust packing in $G$ where $m = \optind$. To avoid ambiguity, we use $\tbx = (\tx_1, \dots, \tx_m)$ to denote the input to the local DP protocol.

To construct the LDP protocol, let $Q_1, \dots, Q_m \subseteq V$ be any partition of $V$ such that $(N[u_i] \setminus T_{u_i}) \subseteq Q_i$ for all $i \in [m]$. Such a partition exists because $N[u_1] \setminus T_{u_1}, \dots, N[u_m] \setminus T_{u_m}$ are disjoint by definition of a packing. Let $P$ be any $(\eps, G, \bt)$-RTGDP protocol for integer aggregation. Our LDP protocol $\tP$ works simply by running the protocol $P$ where each user $i \in [m]$ assumes the role of all users in $Q_i$ where the input is defined as
\begin{align*}
x_u =
\begin{cases}
x_i &\text{ if } u = u_i \\
0 &\text{ otherwise.}
\end{cases}
\end{align*}
for all $u \in Q_i$. We then output the estimate as produced by $P$. The MSE of $\tP$ is obviously the same as that of $P$.

To see that this satisfies $\eps$-LDP, consider any $i \in [m], \tbx_{-i} \in \cX^{[m] \setminus \{i\}}$, we have $\VIEW^{[m] \setminus \{i\}}_{\tP}(\tx) = \VIEW^{V \setminus Q_i}_P(\bx(\tx))$ where $\bx(\tx)$ is the input to $P$ as defined above. Since $V \setminus Q_i \subseteq V \setminus (N[u_i] \setminus T_{u_i})$, $\VIEW^{V \setminus Q_i}_P(\bx(\tx))$ is a post-processing of $\VIEW^{V \setminus (N[u_i] \setminus T_{u_i})}_P(\bx(\tx))$. Thus, for every $\tx_i, \tx'_i \in \cX$,  \Cref{lem:postp} implies that
\begin{align*}
&D_{\infty}\left(\VIEW^{[m] \setminus \{i\}}_{\tP}(\tx_i) ~\|~ \VIEW^{[m] \setminus \{i\}}_{\tP}(\tx'_i)\right) \\
&\leq D_{\infty}\left(\VIEW^{V \setminus (N[u_i] \setminus T_{u_i})}_P(\tx_i) ~\|~ \VIEW^{V \setminus (N[u_i] \setminus T_{u_i})}_P(\tx'_i)\right) \\
&\leq \eps,
\end{align*}
where the last inequality is due to $P$ being $(\eps, G, \bt)$-RTGDP protocol.

As a result, $\tP$ is $\eps$-LDP as desired.
\end{proof}

\begin{proof}[Proof of \Cref{thm:robust-bicriteria-gap}]
We will write $\bt'$ as a shorthand for $\bt + \lceil \alpha \cdot \bdeg_U\rceil$.

We claim that $\optp^{\bt'} \geq \gamma \cdot \optlp^{\bt}$ for $\gamma = \frac{\alpha}{8}$. To prove this, first observe that LP duality implies that $\optlp^{\bt}$ is also equal to the optimal of the following LP:
\begin{align} \label{eq:dual-lp-robust}
&\max \sum_{v \in V, T \in \binom{N(v)}{\leq t_v}} w_{v, T} \\
&\text{s.t.} \sum_{u \in V, T \in \binom{N(u)}{\leq t_u} \atop (N[u] \setminus T) \ni v} w_{u, T} \leq 1 &\forall v \in V\label{eq:dual-sum-const} \\
& \qquad 0 \leq w_{v, T} \leq 1 &\forall v \in V, T \in \binom{N(v)}{\leq t_v}. \nonumber
\end{align}

For $T' \subseteq N[v']$, we let $\oT'[v'] := N[v'] \setminus T'$.

Given an optimal solution $(w_{v, T})_{v \in V, T \in \binom{N(v)}{\leq t_v}}$ to the dual LP \eqref{eq:dual-lp-robust}, we construct a $(2, \bt')$-robust packing as follows:
\begin{itemize}
\item Include each $(v, T)$ in the set $R_0$ with probability $2\gamma \cdot w_{v, T}$. 
\item Filter elements in $R_0$ to create $R_1$ where $R_1$ only includes $(v, T)$ such that
\begin{align} \label{eq:vertex-not-included}
v \notin \bigcup_{(v', T') \in (R_0 \setminus \{(v, T)\})} \oT'[v'],
\end{align}
and
\begin{align} \label{eq:subset-correction-small}
\left|\bigcup_{(v', T') \in (R_0 \setminus \{(v, T)\})} (N(v) \cap \oT'[v'])\right| \leq \alpha \cdot \deg(v). 
\end{align}
\item Construct $R_2$ by adding $\left(v, T \cup \bigcup_{(v', T') \in R_1 \setminus \{(v, T)\}} (N(v') \cap \oT'[v'])\right)$, for each $(v, T) \in R_1$.
\end{itemize}
By the conditions imposed when constructing $R_1$, it is simple to see that $R_2$ is a valid $(2, \bt)$-robust packing. Thus, we are only left to argue that it has a large size. To do so, first observe that
\begin{align}
&\E[~|R_2|~] \nonumber \\
&= \E[~|R_1|~] \nonumber \\
&= \sum_{v \in V, T \in \binom{N(v)}{\leq t_v}} \Pr[(v, T) \in R_1] \nonumber \\
&= \sum_{v \in V, T \in \binom{N(v)}{\leq t_v}} \Pr[(v, T) \in R_0] \cdot \Pr[(v, T) \in R_1 \mid (v, T) \in R_0] \nonumber \\
&= \sum_{v \in V, T \in \binom{N(v)}{\leq t_v}} (2\gamma \cdot w_{v, T}) \cdot \Pr[(v, T) \in R_1 \mid (v, T) \in R_0], \label{eq:expectation-expand}
\end{align}
where the last equality is due to the randomized rounding in the first step.

Let us now fix $v \in V$ and $T \in \binom{N(v)}{\leq t_v}$.
To bound $\Pr[(v, T) \in R_1 \mid (v, T) \in R_0]$, note that from our procedure, we have
\begin{align}
&\Pr[(v, T) \in R_1 \mid (v, T) \in R_0] \nonumber \\
&= \Pr\left[\eqref{eq:vertex-not-included} \text{ and } \eqref{eq:subset-correction-small} \text{ both hold}\right] \nonumber \\
&\geq 1 - \Pr[\eqref{eq:vertex-not-included} \text{ fails}] - \Pr[\eqref{eq:subset-correction-small} \text{ fails}]. \label{eq:failure-bound-expand}
\end{align}
We bound each of the two terms separately. For the first term, we have
\begin{align}
&\Pr[\eqref{eq:vertex-not-included} \text{ fails}] \nonumber \\
&= \Pr\left[v \in \bigcup_{(v', T') \in (R^0 \setminus \{(v, T)\})} \oT'[v']\right] \nonumber \\
&= \Pr\left[\exists {v' \in V, T' \in \binom{N(v')}{\leq t_{v'}} \atop (N[v'] \setminus T') \ni v}, (v', T') \in R^0\right] \nonumber \\
&\leq \sum_{v' \in V, T' \in \binom{N(v')}{\leq t_{v'}} \atop (N[v'] \setminus T') \ni v} \Pr[(v', T') \in R^0] \nonumber \\
&= \sum_{v' \in V, T' \in \binom{N(v')}{\leq t_{v'}} \atop (N[v'] \setminus T') \ni v} 2\gamma \cdot w_{v', T'} \nonumber \\
&\overset{\eqref{eq:dual-sum-const}}{\leq} 2\gamma \nonumber \\
&\leq \frac{1}{4}. \label{eq:first-failure-bound}
\end{align}
Similarly, we have
\begin{align}
&\Pr[\eqref{eq:subset-correction-small} \text{ fails}] \nonumber \\
&= \Pr\left[\left|\bigcup_{(v', T') \in (R^0 \setminus \{(v, T)\})} (N(v) \cap \oT'[v'])\right| > \alpha \cdot \deg(v)\right] \nonumber \\
&\leq \frac{\E\left[\left|\bigcup_{(v', T') \in (R^0 \setminus \{(v, T)\})} (N(v) \cap \oT'[v'])\right|\right]}{\alpha \cdot \deg(v)} \nonumber \\
&=  \frac{\sum_{u \in N(v)} \Pr\left[u \in \bigcup_{(v', T') \in (R^0 \setminus \{(v, T)\})} \oT'[v']\right]}{\alpha \cdot \deg(v)}\nonumber \\
&= \frac{\sum_{u \in N(v)} \Pr\left[\exists {v' \in V, T' \in \binom{N(v')}{\leq t_{v'}} \atop (N[v'] \setminus T') \ni u}, (v', T') \in R^0]\right]}{\alpha \cdot \deg(v)} \nonumber \\
&\leq \frac{\sum_{u \in N(v)} \sum_{v' \in V, T' \in \binom{N(v')}{\leq t_{v'}} \atop (N[v'] \setminus T') \ni u} \Pr[(v', T') \in R^0]}{\alpha \cdot \deg(v)} \nonumber \\
&= \frac{\sum_{u \in N(v)} \sum_{v' \in V, T' \in \binom{N(v')}{\leq t_{v'}} \atop (N[v'] \setminus T') \ni u} 2\gamma \cdot w_{v', T'}}{\alpha \cdot \deg(v)} \nonumber \\
&\overset{\eqref{eq:dual-sum-const}}{\leq} \frac{\sum_{u \in N(v)} 2\gamma}{\alpha \cdot \deg(v)} \nonumber \\
&= \frac{2\gamma}{\alpha} \leq \frac{1}{4}, \label{eq:second-failure-bound}
\end{align}
where the first inequality is due to Markov and the last inequality is from our choice of $\gamma$.

Combining \Cref{eq:expectation-expand,eq:failure-bound-expand,eq:first-failure-bound,eq:second-failure-bound}, we get
\begin{align*}
\E[~|R_2|~] &\geq \sum_{v \in V, T \in \binom{N(v)}{\leq t_v}} \gamma \cdot w_{v, T} = \gamma \cdot \optlp^{\bt},
\end{align*}
which concludes the proof.
\end{proof}

\section{Additional Experiment Details}
\label{app:exp}

We next provide additional experiment details. Optimization problems were solved using \texttt{cvxpy} \citep{diamond2016cvxpy,agrawal2018rewriting}.

\subsection{Additional Dataset Details}
\label{app:exp:data}

Table \ref{tab:datasets} lists additional details for each dataset. All datasets are publicly available through SNAP \citep{snapnets}. We give additional dataset descriptions below.

\begin{table}[!ht]
\caption{Integrality gap comparison of $\OPT_{\LP}$ to minimum dominating set size $|T|$ for 9 network datasets.}
\label{tab:datasets}
\begin{center}
\begin{small}
\begin{tabular}{lrrr}
\toprule
Dataset & Number of nodes $n$ & Number of edges $|E|$ & Maximum degree \\
\midrule
EU Emails (Core)  & $1,005$ & $16,706$ & $347$ \\
Bitcoin (Alpha) & $3,783$ & $12,972$ & $507$ \\
Facebook & $4,039$ & $88,234$ & $1,045$ \\
Bitcoin (OTC) & $5,881$ & $18,591$ & $788$ \\
Enron Emails   & $36,692$ & $183,831$ & $1,383$  \\
GitHub & $37,700$ & $289,003$ & $9,458$ \\
Epinions & $75,879$ & $405,740$ & $3,044$  \\
Twitter & $81,306$ & $1,342,310$ & $3,383$  \\
EU Emails (All) & $265,214$ & $365,570$ & $7,636$ \\
\bottomrule
\end{tabular}
\end{small}
\end{center}
\vskip -0.1in
\end{table} 

\textit{EU Emails datasets.} This data comes from an email network at a European research institution collected by \citep{leskovec2007graph}. We include an undirected edge between sender and receiver who have exchanged an email, though the original dataset contains directed edge information. We consider two subgraphs: \textit{(i)} a ``core'' subgraph consisting of email addresses within the research institution (which we refer to as \textit{EU Emails (Core)}), and \textit{(ii)} the full graph of all emails contained in the dataset (which we refer to as \textit{EU Emails (All)}). 

\textit{Enron Emails.} This data comes from an email communication network within Enron \citep{klimt2004enron}. The graph contains an undirected edge between a sender and receiver if at least one email was exchanged between them. The original graph is undirected. 

\textit{Bitcoin datasets.} We include network data from two different Bitcoin trading platforms, \textit{Bitcoin OTC} and \textit{Bitcoin Alpha} \citep{kumar2016edge,kumar2018rev2}. In the original data, each user rater another user with a value between $-10$ and $10$, where a negative rating corresponds to mistrust and a positive rating corresponds to trust. In our analysis, we include an undirected edge between users if and only if one user rates another with a value greater than $0$. We ignore mistrust ratings. In general, further analysis that includes the mistrust ratings would be interesting to conduct.

\textit{Facebook dataset.} This data from Facebook was published by \citep{leskovec2012learning}. Each undirected edge indicates a friend relationship between two users. 

\textit{GitHub dataset.} This data comes from a social network of GitHub developers collected by \citep{rozemberczki2019multiscale}. Edges represent mutual follower relationships between two users.

\textit{Epinions dataset.} This data comes from the consumer review site Epinions.com \citep{richardson2003trust}. We include an undirected edge if one user ``trusts'' another by giving a positive rating to the other user (which indicates trusting their reviews). The original dataset is a directed graph.

\textit{Twitter dataset.} This data encodes follower relationships from Twitter \citep{leskovec2012learning}. We include an undirected edge if one user has follows another. The original dataset contains directed edges for follower relationships.


\subsection{Integrality Gap Between $\OPT_{\LP}$ and Minimum Dominating Set}
The proposed LP-based mechanism for achieving Trust Graph DP presented in Theorem \ref{thm:lp-protocol} will always perform at least as well as the dominating set protocol in terms of error. Here we consider whether in practice, the LP-based mechanism is better than the dominating set protocol. We observed an integrality gap between $\OPT_{\LP}$ and the size of the minimum dominating set $|T|$ on three out of nine datasets. Notably, we only observed the integrality gap on the email communication datasets, and not on the Bitcoin or social network datasets. Theoretically, it is known that the integrality gap between $\OPT_{\LP}$ and $|T|$ can be up to a factor of up to $O(\log(n))$~\cite{WS11}. Further study of the integrality gaps that might arise in other real network settings remains an interesting open question.
Table \ref{tab:integrality_gaps} lists the ratio $\frac{\OPT_{\LP}}{|T|}$ for each dataset.

Whether any given graph exhibits an integrality gap or not, a prevailing advantage of the proposed improved algorithm via linear programming for TGDP over a minimum dominating set protocol lies in computational efficiency, as finding the minimum dominating set is NP-hard.

\begin{table}[!ht]
\caption{Integrality gap comparison of $\OPT_{\LP}$ to minimum dominating set size $|T|$ for 9 network datasets.}
\label{tab:integrality_gaps}
\begin{center}
\begin{small}
\begin{tabular}{lrrrr}
\toprule
Dataset & $n$ & $\OPT_{\LP}$ & $|T|$ & $\frac{\OPT_{\LP}}{|T|}$ \\
\midrule
EU Emails (Core)  & $1,005$ & $111.97$ & $128 $& $0.8748$ \\
Bitcoin (Alpha) & $3,783$ & $686$ & $686$ & $1$  \\
Facebook & $4,039$ & $10$ & $10$ & $1$ \\
Bitcoin (OTC) & $5,881$ & $1,126$ & $1,126$ & $1$  \\
Enron Emails   & $36,692$ & $3,060.66$ & $3,062$ & $0.9996$  \\
GitHub & $37,700$ & $4,538$ & $4,538$ & $1$ \\
Epinions & $75,879$ & $15,734$ & $15,734$ & $1$  \\
Twitter & $81,306$ & $961$ & $961$ & $1$  \\
EU Emails (All) & $265,214$ & $18,074.40$ & $18,181$ & $0.9941$  \\
\bottomrule
\end{tabular}
\end{small}
\end{center}
\vskip -0.1in
\end{table}

\subsection{Other Local DP Mechanisms for Aggregation} \label{app:local-mech-agg}

Note that we focused our comparisons to the LDP version of the Laplace mechanism since the ratio has a simple expression. Nevertheless, we point out that there are other LDP mechanisms for aggregation. For example, one can apply randomized rounding to the input (where we set it to $\Delta$ with probability $x_i / \Delta$ and set it to zero otherwise) before applying the classic randomized response (RR) mechanism~\cite{warner1965randomized}.  The MSE of this method is actually input-dependent due to the randomized rounding, making it harder to compare against our method. To give the benefit to this algorithm, let us assume for the moment that all inputs are either 0 or $\Delta$. In this case, there is no error due to the randomized rounding. A simple calculation shows that the error from  here is $c_\epsilon \cdot 2\Delta^2 n / \epsilon^2$ where $c_\epsilon = \frac{e^\epsilon \cdot \epsilon^2}{2(e^\epsilon - 1)^2}$. For $\epsilon \leq 1$, this constant $c_\epsilon$ is at least 0.46. Thus, in all cases, our mechanism still demonstrates a significant improvement over this over-optimistic estimate of the error for randomized rounding + RR.
\section{Relating Packing Number and Minimum Dominating Set}
\label{app:packing-v-domset}

In this section, we prove \Cref{thm:domset-v-packing}.
Our proof follows that of Halldorsson et al.~\cite{HalldorssonKT00}, who gave a greedy algorithm that provides a $\sqrt{n}$-approximation for the packing number. At a high level, the main difference between our proof and theirs is that we compare the greedy solution with the (optimal) LP solution, whereas they compare it with the (optimal) integral solution. The LP for the packing number turns out to be exactly the dual of the LP for the minimum dominating set (i.e., LP \eqref{eq:origlp}); this then gives us the desired claim. 

\begin{proof}[Proof of \Cref{thm:domset-v-packing}]
Consider the dual of LP \eqref{eq:origlp}, which can be written as follows:
\begin{align} \label{eq:lp-pack}
&\min \sum_{u \in V} y_u \\
&\text{s.t.} \sum_{u \in N[v]} y_u \leq 1 &\forall v \in V \label{eq:packing-constraint} \\
& \qquad 0 \leq y_u \leq 1 &\forall u \in V. \nonumber
\end{align}

Let $\by^{*} = (y^{*}_u)_{u \in V}$ denote an optimal solution of the LP \eqref{eq:lp-pack}.
Consider the following greedy algorithm by Halldorsson et al.~\cite{HalldorssonKT00}.
\begin{itemize}
\item Set $S \gets \emptyset, i \gets 0$, and $V_i \gets V$.
\item While $V_i \neq \emptyset$ do:
\begin{itemize}
\item Let $v_i$ be the vertex in $V_i$ with the smallest degree (ties broken arbitrarily).
\item Set $S \gets S \cup \{v_i\}$.
\item Let $Z_i := \{u \in V_i \mid N[u] \cap N[v_i] \ne \emptyset\}$.
\item Set $V_{i + 1} \gets V_i \setminus Z_i$.
\item Set $i \gets i + 1$.
\end{itemize}
\item Output $S$.
\end{itemize}

It is clear that the output $S$ is indeed a packing; let $q = |S|$.  We claim the following for all $i \in [q]$:
\begin{align} \label{eq:lp-v-integral-packing}
\sum_{u \in Z_i} y^{*}_u \leq \sqrt{n}.
\end{align}
Before we prove \eqref{eq:lp-v-integral-packing}, let us argue that it implies $\packnum(G) \geq \OPT_{\LP} / \sqrt{n}$. Notice that $\{ Z_i \}_{i \in [q]}$ is a partition of $V$. Therefore, we have
\begin{align*}
\OPT_{\LP} = \sum_{u \in V} y^{*}_u = \sum_{i \in [q]} \sum_{u \in Z_i} y^{*}_u \overset{\eqref{eq:lp-v-integral-packing}}{\leq} \sum_{i \in [q]} \sqrt{n} \leq \packnum(G) \cdot \sqrt{n},
\end{align*}
as desired.

Finally, we prove \eqref{eq:lp-v-integral-packing}. Consider two cases based on whether $\deg(v_i) \leq \sqrt{n} - 1$.
\begin{itemize}
\item Case I: $\deg(v_i) \leq \sqrt{n} - 1$. In this case, we have
\begin{align*}
\sum_{u \in Z_i} y^{*}_u \leq \sum_{w \in N[v_i]} \sum_{u \in N[w]} y^{*}_u \overset{\eqref{eq:packing-constraint}}{\leq} \sum_{w \in N[v_i]} 1 = \deg(v_i) + 1 \leq \sqrt{n}.
\end{align*}
\item Case II: $\deg(v_i) > \sqrt{n} - 1$. Since we pick $v_i$ to be the vertex with the smallest degree among those in $V_i$, we have $\deg(u) > \sqrt{n} - 1$ for all $u \in Z_i$. Therefore, we have
\begin{align*}
\sum_{u \in Z_i} y^{*}_u &\leq \frac{1}{\sqrt{n}} \cdot \sum_{u \in Z_i} (\deg(u) + 1) \cdot y^{*}_u \\
&\leq \frac{1}{\sqrt{n}} \cdot \sum_{u \in V} (\deg(u) + 1) \cdot y^{*}_u \\
&= \frac{1}{\sqrt{n}} \cdot \sum_{v \in V} \sum_{u \in N[v]} y^{*}_u \\
&\overset{\eqref{eq:packing-constraint}}{\leq} \frac{1}{\sqrt{n}} \cdot n \\
&= \sqrt{n}.
\end{align*}
\end{itemize}
Thus, in both cases \eqref{eq:lp-v-integral-packing} holds, which completes our proof.
\end{proof}

\end{document}